\newtheorem{theorem}{Theorem}[section]
\newtheorem{lemma}[theorem]{Lemma}
\newtheorem{proposition}[theorem]{Proposition}
\theoremstyle{definition}
\theoremstyle{remark}
\newcommand{\drawsquare}[2]{\hbox{%
\rule{#2pt}{#1pt}\hskip-#2pt
\rule{#1pt}{#2pt}\hskip-#1pt
\rule[#1pt]{#1pt}{#2pt}}\rule[#1pt]{#2pt}{#2pt}\hskip-#2pt
\rule{#2pt}{#1pt}}
\newcommand{\fund}{\raisebox{-.5pt}{\drawsquare{6.5}{0.4}}}
\newcommand{\antifund}{\overline{\fund}}
\newcommand\blfootnote[1]{%
  \begingroup
  \renewcommand\thefootnote{}\footnote{#1}%
  \addtocounter{footnote}{-1}%
  \endgroup
}
\newcolumntype{C}[1]{>{\centering\arraybackslash}m{#1}}
\newcommand{\comment}[1]{}
\begin{document}

\begin{titlepage}

\begin{flushright}
  USTC-ICTS-15-03\\
\end{flushright}

\begin{center}

\vspace{10mm}

{\LARGE \textbf{Seiberg Duality, Quiver Gauge Theories, \\
and Ihara's Zeta Function}}

\vspace{4mm}

{\large Da Zhou}$^{1,2}$,
{\large Yan Xiao}$^1$, and
{\large Yang-Hui He}$^{1,3,4}$

\blfootnote{
da.z.zhou@gmail.com \ , \quad
yan.xiao@city.ac.uk \ , \quad
hey@maths.ox.ac.uk
}

\vspace{1mm}

{\small
{\it
\begin{tabular}{rl}
${}^{1}$ &
Department of Mathematics, City University, London, EC1V 0HB, UK
\\
${}^{2}$ &
The Interdisciplinary Center for Theoretical Study,\\
&University of Science and Technology of China, Hefei, Anhui, 230026, P.R.~China
\\
${}^{3}$ &
School of Physics, NanKai University, Tianjin, 300071, P.R.~China
\\
${}^{4}$ &
Merton College, University of Oxford, OX14JD, UK
\end{tabular}
}}

\end{center}

\vspace{10mm}

\begin{abstract}
We study Ihara's zeta function for graphs in the context of quivers arising from gauge theories, especially under Seiberg duality transformations.
The distribution of poles is studied as we proceed along the duality tree, in light of the weak and strong graph versions of the Riemann Hypothesis.
As a by-product, we find a refined version of Ihara's zeta function to be the generating function for the generic superpotential of the gauge theory.
\end{abstract}

\end{titlepage}

\tableofcontents

\vspace{20mm}

\section{Introduction}
Quiver gauge theories have over the last two decades become fruitful in the cross-fertilization between physics and mathematics.
As quantum field theories, especially those with supersymmetry, they are archetypal of those arising from string theory and phenomenology; as finite graphs, they crystallize the underlying geometry and algebra, especially those with Calabi-Yau properties. More recently, the dialogue has extended to number theory, in particular to algebraic numbers and dessins d'enfants as well as to finite fields.

In \cite{He:2011ge}, computations of the zeta-function for finite graphs \cite{ihara} was initiated for quiver gauge theories bearing in mind the ultimate hope of finding the relation between properties such as whether the graph satisfies the analogue of the Riemann Hypothesis and the algebraic geometry of the Calabi-Yau moduli space of vacua. Perhaps the most remarkable action on quivers is mutation, where in the physics literature this is a guise of Seiberg duality \cite{2} and in the mathematics literature this is realized as cluster transformation \cite{FZ}. 
It is a beautiful fact that these were discovered independently, one as a duality between quantum field theories and another as an isomorphism of quiver representations (and Calabi-Yau geometry).
Thus it is natural to investigate the properties of the zeta function under such a key transformation.

There is therefore a trio of conversations: the physics of the quiver gauge theory, the geometry of the moduli space of representation (equivalently the vacuum moduli space of the field theory), as well as the number theory of the zero/pole structure of the graph zeta-function of Ihara.
It is this trio that we wish to analyse.

We begin, in \S\ref{sec:2}, with a recapitulation of the graph zeta function, for directed and undirected cases -- though most of our quivers are directed -- emphasizing on concepts such as regularity and Ramanujan.
In parallel, in \S\ref{s:quiver}, we briefly summarize the rudiments of the quivers and associated Seiberg dualities in the gauge theory.
After examining in detail case studies of two of the most famous quiver gauge theories, namely those arising from the world volume physics of D-branes in the back-ground of affine Calabi-Yau threefolds as cones over $\mathbb{C}\mathbb{P}^2$ and $\mathbb{C}\mathbb{P}^1 \times \mathbb{C}\mathbb{P}^1$, in \S\ref{s:p2} and \S\ref{s:f0}, we delve into a wealth of examples in \S\ref{s:eg}, in order to see how Seiberg duality influences the Riemann and Ramanujan properties of the Ihara zeta function and conversely how the latter can be harnessed as a tool to explore the duality tree of field theories. 

In \S\ref{s:numerical_riemann}, we apply a specific graph Riemann Hypothesis for our directed quivers and summarize the results in Table \ref{table:RH_exp}.
In due course, we find an interesting fact of how the zeta-function serves as a generating function for candidate terms in the superpotential by ``refining'' it with a multi-variable version in \S\ref{Zeta_Graph_Properties}. Finally, we conclude with discussion and prospects in \S\ref{s:conc}.

\section{The Ihara Zeta Function}\label{sec:2}
First, let us recall the definitions of Ihara's graph zeta function, first defined by Ihara \cite{ihara}, studied extensively in \cite{Horton,M-S,Terras,Tarfuleaa,Sunada:LF,Sunada:FG}, and introduced to the study of quiver gauge theories in \cite{He:2011ge}.
A few graph-theoretic concepts \cite{Terras}, most of which are self-explanatory, are needed before we give the full definition for the Ihara zeta function, which we include here for completeness and to set our nomenclature.

\begin{itemize}
	\item An {\bf undirected} graph $G = (V,E)$ is a finite non-empty set of \emph{vertices} $V$ and a finite multiset $E$ of undirected \emph{edges} or unordered pairs of vertices. 
A graph is  \emph{simple} if there is no loops, \emph{i.e.}, no edges of the form $(u,u) \; \forall u \in V$ and there is only a single edge between any two vertices.

	\item A {\bf directed} graph $G = (V,E)$ is a finite non-empty set of \emph{vertices} $V$ and a finite multiset $E$ of \emph{arrows} or ordered pairs of vertices. For an arrow $e = (u,v)$, $u$ is defined to be its \emph{origin} $o(e)$ and $v$ to be its \emph{terminus} $t(r)$ (sometimes these are also called head and tail, respectively. However, we will reserve the word tail for a usage to be introduce shortly). Its {\bf inverse} arrow $\bar{e}$ is defined to be $(v,u)$ by reversing its origin and terminus. In general, directed graphs do not need to have both arrows and its inverse arrows both present in $E$.
We make this distinction between undirected and directed graphs because the form of the zeta-function, as we shall see, is sensitively dependent thereupon.
In the context of the gauge theories which we will soon study, we will primarily, for the sake of chirality, study directed graphs.

	\item A {\bf cycle} $c$ of length $n$ in graph $G$ is a sequence $c = (e_1, e_2, \ldots ,e_n)$ of $n$ arrows in $G$ such that $t(e_i) = o(e_{i+1})$ for $1 \leq i \leq n-1$ and $t(e_n) = t(e_1)$. Cycle $c$ is said to have \emph{backtrack} if $\bar{e}_{i+1} = e_i$ for $1 \leq i \leq n-1$. In addition, $c$ is said to have \emph{tail} if $\bar{e}_n = e_1$.

    \item The \emph{in-degree} (respectively \emph{out-degree}) of any vertex of a graph is simply the number of in-coming (respectively out-going) arrows, on the other hand, the \emph{undirected degree} of a vertex is simply the number of length-1 walks starting from this vertex using undirected edges only.

	\item The \emph{$n$-multiple} of a cycle $c$ is the cycle formed by traversing $c$ $n$ times. A cycle is called \emph{primitive} if it is not some $n$-multiple of some other cycle for $n \geq 2$. For a cycle $c = (e_1, e_2, \ldots , e_n)$, the equivalence class $[c]$ is defined to be the cyclic permutations
	\[
	[c] = \{(e_1, e_2, \ldots , e_n), (e_2, e_3, \ldots , e_n, e_1), \ldots , (e_n, e_1, \ldots , e_{n-1})\}
	,\]
	which simply means cycles are equivalent up to choices of initial and terminal vertices. Therefore a {\bf prime} in a graph is primitive cycle that is non-backtracking, tailless and not a $n$-multiple cycle.

        \item We are primarily concerned with graphs which are \emph{finite}, \emph{i.e.}, finite number of nodes and edges) and \emph{connected}, \emph{i.e.} every node can be reached by traversing along some combinations of paths.

        \item The {\bf adjacency matrix} for a graph with $n$ nodes is an $n \times n$ matrix and has its $(i,j)$-th entry specifying the number of undirected edges from node $i$ to $j$ with $i$-th diagonal entry being twice the number of self-adjoining loops on $i$-th node. We emphasize that the adjacency matrix we use here has this convention for the diagonal; this has deep implications in tune with Cartan matrices, Frobenius eigenvalues and ADE classifications of Dynkin diagrams \cite{mckay,He:1999xj}.
Moreover, any undirected edges in a graph $G$ can be replaced with a bi-directional arrow, thereby assigning standard directions onto $G$. 
\end{itemize}
	
Now, we are readily for the central definition of our paper.
The {\bf Ihara zeta function}, also called {\bf graph zeta function}, was initially defined \cite{ihara} for graphs $G$ that are \emph{finite}, \emph{undirected}, \emph{connected} and \emph{tailless}. 
Nevertheless, the graph is allowed to have multiple edges between nodes as well as loops (length 1 cycle). 
With these conditions, the Ihara zeta function is defined as:
\begin{equation}
	\zeta_G(z)\coloneqq\prod_{[P]\in \mbox{Prime Cycles}} \left(1-z^{l([P])}\right)^{-1},
\end{equation}
where the infinite product is over all prime equivalence classes and $l([P])$ is the length of the prime cycle.
This definition is clearly motivated and parallels well that of the famous Riemann zeta-function, whose Euler-product is $\zeta(z) = \prod\limits_{p \in \mbox{{\tiny Primes}}}\left( 1 - p^{-z} \right)^{-1} = \sum\limits_{n \ge 1} n^{-z}$.

Indeed, as with all zeta-functions, the interplay between the expression as a product and as a sum is that between primality and integrality.
This is also the case with the graph zeta function.
It is shown in~\cite{Tarfuleaa} that the Ihara zeta function takes the following closed form for any partially directed graphs, in which both arrows and edges are allowed \cite{Horton,M-S}:
\begin{equation}
	\zeta_G(z)=\frac{(1-z^{2})^{-\rm{Tr}(Q-I)/2}}{{\rm Det}(I-Az+Qz^{2}+Pz^{3})},
	\label{eq:partial_zeta}
\end{equation}
where the following clarifications on notation are understood:
\begin{itemize} 
	\item This is the general equation for partially directed graphs, but we only have fully directed graphs for particular gauge theories.
	\item $A$ stands for the full adjacency matrix with its entry $a_{ij}$ 	specifying length 1 walk from $i$-th node to $j$-th node using either an edge or an arrow.
	\item $P$ is the directed adjacency matrix whose entries are composed entirely of arrows.
	\item $Q$ is the matrix for undirected degrees. The $i$-th undirected degree is the number of length $1$ walks from $i$-th node to its neighbours using undirected edges \emph{only}. Specifically, it turns out that the exponent in the numerator stands for the number of vertices minus the number of edges, or simply $1-r$ with $r$ being the rank of fundamental group of the graph. For nodes with self-adjoining loops, the undirected degree is counted as 2.
\end{itemize}
Lastly, the relation between Euler-product and determinant expression of of Ihara zeta function is discussed in more detail in Appendix \ref{ap:euler_determinant}.

\subsection{Poles, Regularity and Ramanujan}
Some of the initial motivations for defining the zeta function of graphs are, of course, number theoretical, and in particular in issues such as finding analogues of the Riemann Hypothesis (cf.~\cite{He:2015jla} for a recent equivalent restatement of the Riemann Hypothesis in string theory).
Here, let us highlight some salient features of Ihara's zeta function.

Now, a graph is called a $(q+1)$-\emph{regular graph} if all of its nodes are connected to other $q+1$ nodes through length one walks. In addition, a directed regular graph must satisfy a stronger condition that all the nodes must have their in-degrees and out-degrees equal (see page 29 of \cite{chen:1997}).
A $(q+1)-$regular graph is called {\bf Ramanujan} if the maximum of the absolute value of eigenvalues of the adjacency matrix $A$, excluding $q+1$ itself, is bounded by $2\sqrt{q}$. In short, 
\begin{equation}
{\rm max} \{ |\lambda|: \lambda \in {\rm Spec}(A), |\lambda| \neq q+1 \} \leq 2\sqrt{q} \ .
\label{ramanujan}
\end{equation}

While the original motivation to study Ramanujan graphs was because they have maximal gaps in their spectrum, it turns out, as we now see, that they play a key role in a graph version of the Riemann Hypothesis.
The Ihara zeta function of a $(q+1)-$regular undirected graph is said to satisfy \emph{Riemann Hypothesis} if, in complete analogy with the number-theoretic case, for $0<{\rm Re}(s)<1$, the zeros of $\zeta_G(q^{-s})^{-1}$ lie on the ${\rm Re}(s)=\frac{1}{2}$ line. Here, as is customary with definitions of zeta-functions, we define the exponentiated variable
\begin{equation}
z := q^{-s} \ .
\end{equation}
Note, however, the parameter $q$ is natural for regular graphs whereas for irregular ones more work is needed to extract a similar quantity. In addition to the above definition for Riemann Hypothesis for undirected regular graphs, it also can be shown \cite{Terras} that \emph{a $(q+1)$-regular undirected graph satisfies Riemann Hypothesis if and only if it is Ramanujan}.

Apart from the aforementioned property, Ramanujan graphs have diverse connections with various fields, such as expander graphs in Communication Network Theory that revolves around extremal problems, Number Theory, Representation Theory and Algebraic Geometry. For more comprehensive surveys, we refer the readers to \cite{Terras,Murty,Winnie:NT, Lubotzky,Valette}. A more recent survey concentrating on connection between Graph Theory and automorphic representation can be found in \cite{Winnie:RD}.

In more generality, we also have a definition of the {\it Graph Theory Riemann Hypothesis} for irregular undirected graphs as follows:
\begin{enumerate}
	\item $\zeta_G(z)$ is pole free for
	\begin{equation}
	R_G < |z| < \sqrt{R_G},
	\label{def:strong_RH}
	\end{equation}
	where $R_G$ is the radius of convergence of $\zeta_G(z)$, \emph{i.e.}, the position of nearest pole to origin.

        \item There is a weaker version of the above, requiring $\zeta_G(z)$ to be {\it pole free} in
	\begin{equation}
	R_G < |z| < \frac{1}{\sqrt{q}},
	\label{def:weak_RH}
	\end{equation}
where $q$ is the largest degree of vertices.
\end{enumerate}

These two definitions are motivated by the fact that if $z$ is substituted by $R_G^s$, all poles of $\zeta_G(z)$ are then located within the ``critical strip'', $0 < \rm{Re}(s) < 1$, with poles at $s=0$ (i.e., $z=1$) and $s=1$ (i.e., $z=R_G$). 
Importantly, as is central to the study of any zeta function, there is an underlying {\bf functional equation}.
It can be shown \cite{Terras} that the functional equation for Ihara's zeta function for undirected graphs can take the form
\begin{equation}	
\Lambda_G(z) := (1-z^2)^{r-1+\frac{n}{2}}(1-q^2z^2)^{\frac{n}{2}}\zeta_G(z) = (-1)^n\Lambda_G(\frac{1}{qz}),
	\label{Functional_Eqn}
\end{equation}
where $r$ is the rank of fundamental group with $n$ being the number of vertices. With substitution $z=q^{-s}$, it is obvious that the above equation is symmetric with respect to $s=\frac{1}{2}$.

Since irregular graphs do not have a functional equation relating $f(s)$ and $f(1-s)$, it is natural to define a pole free region for $\frac{1}{2} < \rm{Re}(s) < 1$, therefore motivating~\eqref{def:strong_RH}. The weak version in this sense simply shrinks the size of the pole free region. 

It should be noted that the formulation of Riemann Hypothesis in terms of Ihara zeta function is based on the fact that the adjacency matrix of an undirected regular graph is symmetric and we can find a functional equation $f(s)$ which is symmetric with respect to $s=\frac{1}{2}$. However, for the case we will consider shortly, especially in the context of Seiberg duality in $\mathcal{N}=1$ supersymmetric gauge theories, we will be primarily concerned only with directed graphs. Since the adjacency matrix in these theories are rarely symmetric and the form of Ihara zeta function is very different from the undirected case, it is not possible to construct a functional equation similar to that of~\eqref{Functional_Eqn}, thus we can not construct a direct analogue of Riemann Hypothesis for directed graphs.

\section{Four-Dimensional Quiver Gauge Theories}\label{s:quiver}
Having recapitulated the relevant ingredients from the study of graphs and Ihara's zeta function, we now turn to our protagonist, the finite (generically directed) graphs which are realized as {\bf quivers} associated to four-dimensional, ${\mathcal N}=1$ supersymmetric gauge theories.
Briefly, quivers are finite, labelled, directed graphs. In addition to these properties, they are also allowed to have loops, bi-directional arrows and self-adjoining loops. On the vertices we have gauge groups and the corresponding gauge theory has product gauge group in the from of $\prod U(N_{i})$ with $i$-th vertex contributing a single gauge group factor. In the infra-red, the $U(1)$ subgroups of each of the $U(N_i)$ factor become frozen and we have an effective gauge group of products of special unitary groups. However, from the quiver representation point of view, it is important to retain the $U(N_i)$ as labels of the nodes.
The matter content of the theories are represented as bi-fundamentals between gauge groups, so each arrow from $i$-th vertex to $j$-th vertex transforms as $(N_i,\overline{N}_j)$ representation of the factor gauge group $SU(N_{i})\times SU(N_{j})$. Gauge theories with only undirected edges in their quivers are \emph{non-chiral} and $i$-th vertex should be associated to a vector space $\mathbb{C}^{N_i}$ with arrows being maps in Hom($\mathbb{C}^{N_i}$,$\mathbb{C}^{N_j}$). 

Furthermore, the matter content of the theories must be \emph{anomaly free}. This condition ensures that the corresponding quantum field theory is well-defined. When translated into quivers, this constraint has the following form:
	\begin{equation}\label{anom}
		(a_{ij}-a_{ji})n_{i}=0,
	\end{equation}
where $a_{ij}$ is the \emph{full} adjacency matrix of the quiver with $n_{i}$ being the vertex rank vector encoding the list of integers $N_i$. This condition will become important when we discuss the zeta function of our quivers, and certain Diophantine equations dictating their possible rank assignments.

\subsection{Seiberg Duality and Cluster Mutations}
\label{sec:seiberg-dual}
Modern studies of $\mathcal{N}=1$ supersymmetric gauge theory lead to Seiberg's discovery of IR duality between two QCD-like theories~\cite{2}, where the duals have same $N_{f}$ fundamental chiral flavors of quarks, but different gauge groups $SU(N_{c})$ and $SU({N_f}-{N_{c}})$. The duality predicts that the two theories have the same supersymmetric moduli space and 't Hooft anomaly matching condition. In extending the duality to other $\mathcal{N}=1$ theories~\cite{3,4,5}, the most suggestive method is via the Hannany-Witten suspended brane construction~\cite{6}. In such constructions, we obtain a D-dimensional gauge theory by extending a set of Dirichlet branes in D dimensions, \emph{i.e.} the embedding in the transverse $10-D$ dimensions determines the spectrum and other properties of the theory. Therein, the D-branes are strings probing the transverse space, with one or both ends on the NS5-branes. If we take $N_{c}$ finite length strings between 5-branes, a $U(N_{c})$ pure world-volume gauge theory is produced. By introducing $N_{f}$ strings with semi-infinite length, $N_{f}$ quarks are produced. 

In this stringy context, Seiberg duality is obtained through moving one 5-brane to exchange its position with others, while reversing the orientation of the finite strings. Within the context of $N_{f} \ge N_{c}$, the movement and reconnection of strings produce a dual theory with $N_{f}$ semi-infinite and $N_{f}-N_{c}$ strings~\cite{7}. Moreover, since the inverse coupling is identified with the length of the string, the duality arises if we vary the gauge coupling through infinity via variation of a moduli field.
Since the embedding of gauge theories into string theory is not unique, the field-theoretic duality is interpreted as different brane configurations that give dual low-energy physics.
\comment{
we have a large class of $\mathcal{N}=1$ cases supported on the D-branes probing singularity of $\mathbb{C}^{3}/\Gamma$ orbifolds. In particular, it is interesting to look for instances when $\Gamma \subset SU(3)$ if we want to preserve $\mathcal{N}=1$ SUSY when orbifolding the parent $\mathcal{N}=4$ super-Yang-Mills theory.}

Before we discuss Seiberg duals for specific theories, it is worthwhile to look at the rules for Seiberg transformation on gauge theories. 
We summarize that the prototype \cite{2} is a pair, within the conformal window
$\frac32 N_c \le N_f \le 3N_c$,
(1) direct electric theory with $N_c$ colours with $N_f$ flavours and (2) 
dual magnetic theory with $N_f-N_c$ colours also with $N_f$ flavours:
\begin{equation}
\begin{array}{cc}
\begin{array}{c||c|cc}
  & SU(N_c)& SU(N_f)_L & SU(N_f)_R \\ \hline  
Q & \fund & \fund & 1 \\ 
Q'& \antifund & 1 & \antifund \\ 
\end{array}
&
\begin{array}{c||c|cc}
  & SU(N_f-N_c)& SU(N_f)_L & SU(N_f)_R \\ \hline
q & \fund & \antifund & 1 \\ 
q'& \antifund & 1 & \fund  \\ 
M & 1 & \fund & \antifund 
\end{array}
\\ 
W=0
&
W=Mqq'
\end{array}
\end{equation}
In the above, the quarks $Q$ and $Q'$ are transformed to the dual quarks $q$ and $q'$ and a Seiberg dual meson $M$ together with superpotential $Mqq'$ is generated.
It was recognized in \cite{Feng:2000mi,Feng:2001bn,Cachazo:2001gh,Beasley:2001zp} that, when applied to one node of any ${\mathcal N}=1$ quiver, Seiberg duality is the following graphical rule:
\[
\includegraphics[trim=0mm 0mm 0mm 0mm, clip, width=4in]{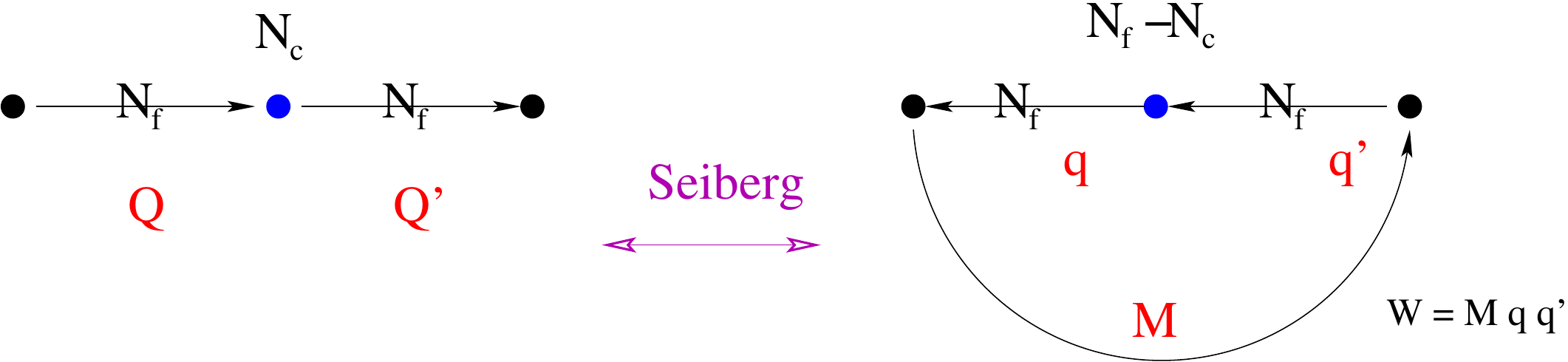}
\]
A truly remarkable fact is that around about the same time, mathematicians have independently noticed the importance of this graphical rule in the study of cluster variables \cite{FZ} and the above was called {\it cluster mutation} (though at the time, the role of the superpotential was not yet appreciated in the mathematics).
That Seiberg duality as an equivalence of quantum theory should be the same as cluster mutation generating an equivalence in the derived category of coherent sheafs on Calabi-Yau manifolds is a deep result in mathematical physics.

From an algorithmic point of view, one can think of the duality/mutation as the following transformation, for a quiver gauge theory with adjacency matrix $a_{ij}$ and $n$ vertices or gauge group factors, on $a_{ij}$ and rank vector $n_i$.
Indeed, there are $n$ choices to perform Seiberg transformation at any stage and the arrows and ranks will be changed while $n$ will always remain fixed.
Suppose, we dualized on node $i_0$, then the transformation is:
\begin{enumerate}
	\item Define $I_{in}$ to be the nodes having arrows coming into $i_0$, $I_{out}$ to be those having arrows coming from $i_0$ and $I_{no}$ to be those unconnected to $i_0$.
	\item Change the rank of node $i_0$ from $N_c$ to $N_f - N_c$, where $N_f = \sum_{i \in I_{in}}a_{i,i_0}N_i = \sum_{i \in I_{out}}a_{i_0,i}N_i$. This changes the rank of the gauge group of the dualized node.
	\item $a_{ij}^{dual} = a_{ji}$ if either $i,j=i_0$. In field theory context, this translates to the statement that the quarks of $SU(N_f-N_c)$ gauge group are in complex conjugate representation to the quarks of original $SU(N_c)$ group.
	\item $a_{\alpha \beta}^{dual} = a_{\alpha \beta} - a_{i_0 \alpha}a_{\beta i_0}$ for $\alpha \in I_{out}$ and $\beta \in I_{in}$. This is equivalent to adding Seiberg meson into the dual theory. Any bi-directional arrow corresponds to a quadratic mass term and can be integrated out.
	\label{rule}
\end{enumerate}

\section{Illustrative Example: Cone over \texorpdfstring{$\mathbb{P}^2$}{P2}}
\label{s:p2}
Having presented the necessary mathematics in terms of graph zeta functions and associated statements pertaining to their Riemann Hypotheses, as well as the physics in terms of four-dimensional $\mathcal{N}=1$ quiver gauge theories and their Seiberg duality, it is illustrative to start with an archetypical example to see the interplay of the two sides.

\subsection{$dP_0$: Cone over \texorpdfstring{$\mathbb{P}^2$}{p2}}
Perhaps the most well-studied quiver is the one  presented in Figure~\ref{fig:p2}. If we write the 3 multi-arrows as $X_i, Y_i, Z_i$, it has an accompanying superpotential $W = \epsilon_{ijk}X^i Y^j Z^k$.  
This quiver gauge theory comes from taking the quotient of $\mathcal{N}=4$ super-Yang-Mills theory by a $\mathbb{Z}_3$-subgroup of $SU(3)$ to preserve $\mathcal{N}=1$ supersymmetry. Geometrically, since the moduli space of the parent $\mathcal{N}=4$ SYM is simply the affine Calabi-Yau 3-fold $\mathbb{C}^3$, the moduli space here is $\mathbb{C}^3 / \mathbb{Z}_3$ as a complex cone over $\mathbb{P}^2$ (and whose resolution can be seen as the total space of the anti-canonical bundle $\mathcal{O}_{\mathbb{P}^2}$ over $\mathbb{P}^2$).
Thus, historically, the theory is called the $\mathbb{P}^2$ or $dP_0$ theory.
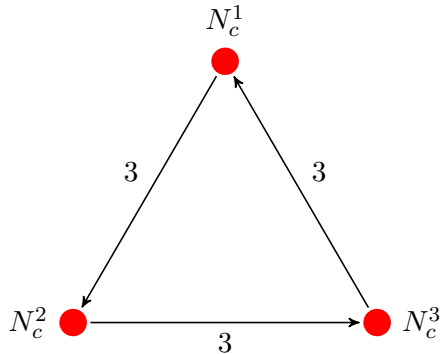
\begin{figure}[!ht]
\centering
	\begin{tikzpicture}[->,>=stealth',shorten >= 1pt,auto,semithick,
	main node/.style={circle,fill=red},
	inverse/.style={<-,shorten <= 1pt, auto, semithick}]
		\pgfmathsetmacro{\y}{4*sin(60)}
		\node[main node]		(1)		at (0,\y)		{};
		\node[black,above] 		at (1.north)		{$N_c^1$};
		\node[main node]		(2)  	at (-2,0)		{};
		\node[black,left] 		at (2.west)		{$N_c^2$};
		\node[main node]		(3)  	at (2,0)			{};
		\node[black,right] 		at (3.east)		{$N_c^3$};
			
		\path 	(2)		edge		[inverse]		node		{3}		(1)
				(3)		edge		[inverse]		node		{3}		(2)
				(1)		edge		[inverse]		node		{3}		(3);
			
	\end{tikzpicture}
	\caption{{\sf Quiver for $\mathbb{P}^{2}$, vertices have $SU(N_c)$ factor gauge groups and the number of arrows between each pair of vertices is 3.}}
	\label{fig:p2}
\end{figure}

\subsection{Adjacency Spectra and Poles}\label{sec:p2_pole}
Since this quiver is fully directed, its Ihara zeta function has a particular simple form~\cite{M-S,He:2011ge}:
\begin{equation}
	\zeta_G(z) = \frac{1}{{\rm Det}(I-Az)},
    \label{eq:fully-directed-zeta}
\end{equation}
where $A$ is the adjacency matrix for digraph $G$.
We see therefore that the reciprocal of zeta function for directed graphs is none other than the {\bf characteristic polynomial} of the adjacency matrix $A$,
\begin{eqnarray}
  \chi_A(\lambda) = {\rm Det}(\lambda I - A)
  \sim {\rm Det}(I - A \lambda^{-1})
  = \frac{1}{\zeta_G(\lambda^{-1})} \ .
  \label{eq:p2poleandspectra}
\end{eqnarray}
Therefore the behaviour of poles (those $z$ that make $\zeta_G(z)$ singular)
in Ihara zeta function is inversely proportional to the distribution of
eigenvalues (those $\lambda$ that make $\chi_A(\lambda)$ vanish)
of adjacency matrix for each Seiberg dual quiver.

Let us call the quiver in Figure \ref{fig:p2} $Q_0$ and hence the adjacency matrix and the Ihara zeta function are
\begin{equation} \label{eq:base-adjacency}
A_0 = \left( \begin{array}{ccc}
    0 & 3 & 0 \\ 0 & 0 & 3 \\ 3 & 0 & 0
  \end{array} \right) \ ,
\qquad
\zeta_{Q_0}(z) = \frac{1}{1 - 27z^3} \ .
\end{equation}
The above will be our ``\emph{basic case}'' from which we will perform repeated Seiberg duals on different nodes.
Using rules from Section~\ref{sec:seiberg-dual} to dualize on node $N_c^1$, we have following results as presented in Figure~\ref{fig:P2_dual}. 
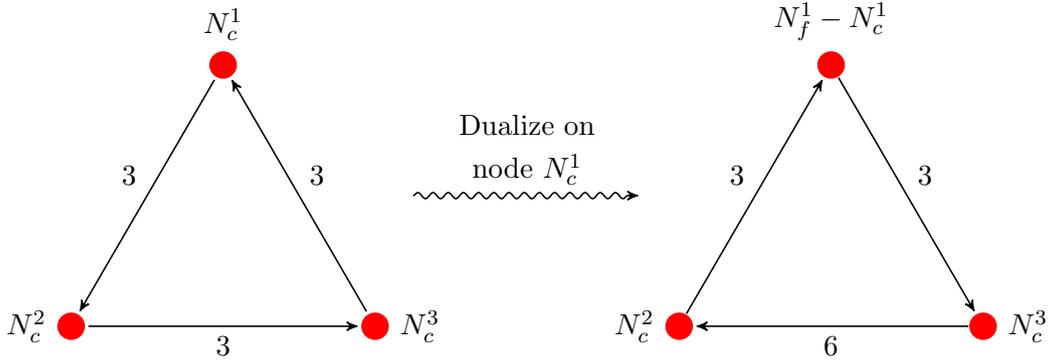
\begin{figure}[!h]
\centering
	\begin{tikzpicture}[->,>=stealth',shorten >= 1pt,auto,semithick,
	main node/.style={circle,fill=red},
	inverse/.style={<-,shorten <= 1pt, auto, semithick}]
		\pgfmathsetmacro{\y}{4*sin(60)}
		\pgfmathsetmacro{\yd}{2*sin(60)}
		\node[main node]		(1)		at (-4,\y)		{};
		\node[black,above] 		at (1.north)		{$N_c^1$};
		\node[main node]		(2)  	at (-6,0)		{};
		\node[black,left] 		at (2.west)		{$N_c^2$};
		\node[main node]		(3)  	at (-2,0)			{};
		\node[black,right] 		at (3.east)		{$N_c^3$};
			
		\path 	(2)		edge		[inverse]		node		{3}		(1)
				(3)		edge		[inverse]		node		{3}		(2)
				(1)		edge		[inverse]		node		{3}		(3);
		
		\draw [->,decorate,decoration={snake,amplitude=.4mm,segment length=2mm,post length=1mm}] (-1.5,\yd) -- (1.5,\yd)
		node [above,text width=3cm,align=center,midway]
		{Dualize on node $N_c^1$};
		
		\node[main node]		(4)		at (4,\y)		{};
		\node[black,above] 		at (4.north)		{$N_f^1 - N_c^1$};
		\node[main node]		(5)  	at (2,0)		{};
		\node[black,left] 		at (5.west)		{$N_c^2$};
		\node[main node]		(6)  	at (6,0)			{};
		\node[black,right] 		at (6.east)		{$N_c^3$};
			
		\path 	(5)		edge				node		{3}		(4)
				(6)		edge				node		{6}		(5)
				(4)		edge				node		{3}		(6);
	\end{tikzpicture}
	\caption{{\sf Quivers for $\mathbb{P}^{2}$ and its Seiberg dual on node $N_c^1$.}}
	\label{fig:P2_dual}
\end{figure}

The reciprocals of their zeta functions are readily seen to be $1-27z^3$
and $1-54z^3$ respectively. Hence the poles lie on the lines of third
roots of unity with $|z| = 1/\sqrt[3]{|c_3|}$ where $c_3$ is the coefficient
of $z^3$ term in the reciprocal. Let us perform duality to a few levels, picking any of the three nodes each time.
This has been well-known \cite{Franco:2003ja,Franco:2003ea,Hanany:2012mb} to exhibit a dendritic behaviour and can be drawn as a {\bf duality tree} where each node in the branching corresponds to a new quiver (i.e., this is a tree in the space of theories).
For reference, the tree for $dP_0$ is included in part (a) of Figure \ref{fig:P2-tree-poles}, where the same colours correspond to the same quiver, reflecting the 3-fold symmetric in the problem.

\begin{figure}[!ht]
(a)
  \includegraphics[width=.42\textwidth]{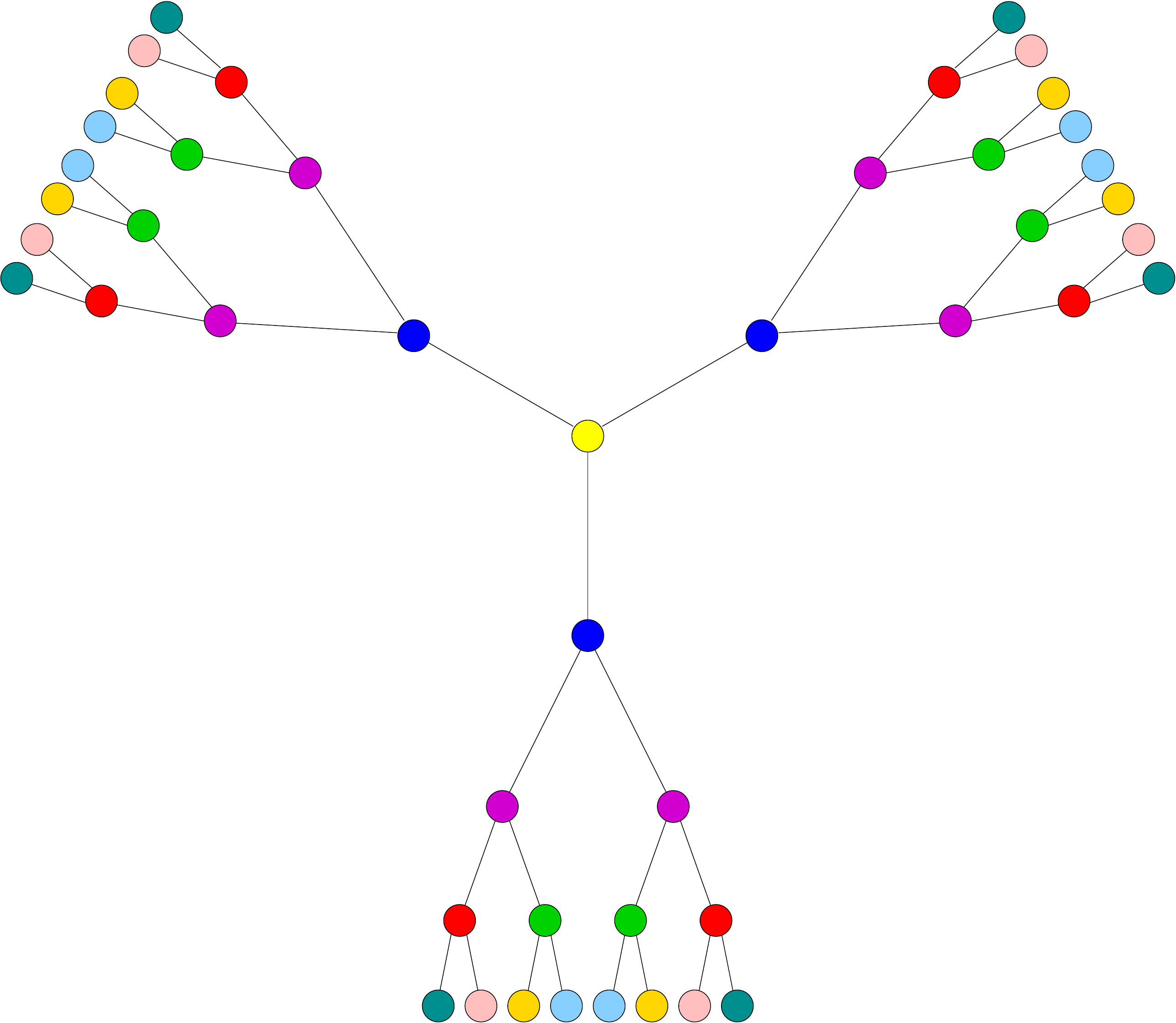}
(b)
  \includegraphics[width=.5\textwidth]{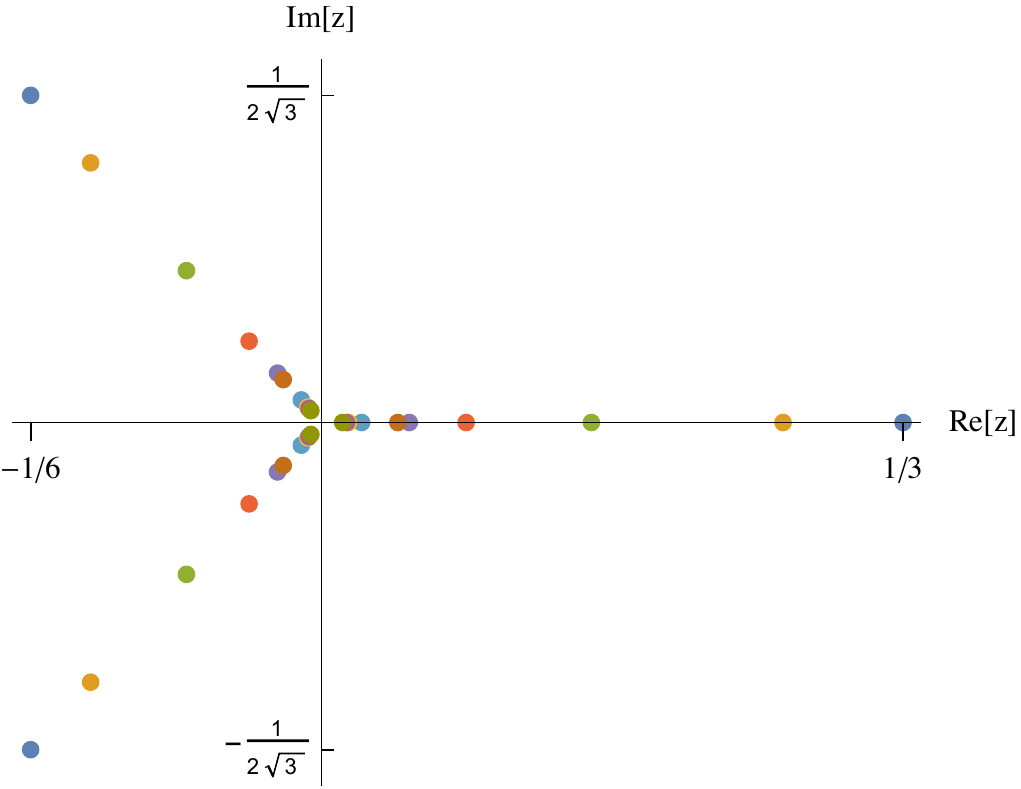}
  \caption{\sf $\mathbb{P}^2$ duality tree and the position of the poles on the complex plane for the Ihara zeta function for all the dual quivers, up to 5 levels.
NB: Colors of leaves in part (a) and the colors of dots in part (b) are not related. As we delve deeper into the duality tree, the lengths of poles, on a
  whole, get shorter, as the poles accumulate toward the origin.
This can be seen from Table \ref{tab:pole-length}.}
  \label{fig:P2-tree-poles}
\end{figure}

Now, let us compute the poles of the zeta function, for the various dual phases.
Numerically, we see that their length (i.e., distance to the origin on the complex plane) are as shown in Table \ref{tab:pole-length}.
\begin{table}[h!!!]
  \centering
  \begin{tabular}{|c|c|}
    \hline
    Level in duality tree & All possible pole lengths \\
    \hline
    1 & 1/3 \\
    \hline
    2 & 0.26 \\
    \hline
    3 & 0.15 \\
    \hline
    4 & 0.083, 0.050 \\
    \hline
    5 & 0.044, 0.016, 0.014, 0.0084 \\
    \hline
    6 & 0.023, 0.0048, 0.0040, 0.0024, 0.0014, 0.0012, 0.00080, 0.00043 \\
    \hline
  \end{tabular}
  \caption{{\sf Lengths of poles in first 6 levels. By length we simply mean the distance of the pole to the origin in the complex plane.}}
  \label{tab:pole-length}
\end{table}
In fact, all the Seiberg duals have their
zeta function poles sitting on the lines of cube roots of unity for 
$\mathbb{P}^2$
quiver (see Figure \ref{fig:P2-tree-poles}) and this can be derived
analytically as follows.

If we denote $(u,v,w)$ as the number of bi-fundamental arrows between the 3 nodes, a tree of branching integer triplets can be obtained by successive dualization. 
These triplets in turn describe all possible solutions of the $\mathbb{P}^2$ quiver under Seiberg duality. Furthermore, for all dual solutions to be anomaly free, the Diophantine equation~\cite{Cecotti:1992rm,Feng:2002kk,Franco:2003ja,Franco:2003ea,Benvenuti:2004dw,Hanany:2012mb}:
\begin{equation}\label{eq:dio1}
	u^2+v^2+w^2=uvw
\end{equation}
has to be satisfied. This is the famous \emph{Markov equation} and it is
well-known that all its solutions can be generated from the basic solution
$(3,3,3)$ by performing \emph{ad infinitum} the following three transformations
\cite{cassels:1965} (the proofs are gathered in Appendix~\ref{ap:markov} for the readers' reference),
\begin{eqnarray}
  (u,v,w) & \rightarrow & (u,v,uv-w) \ ,\nonumber\\
  (u,v,w) & \rightarrow & (u,uw-v,w) \ ,\nonumber\\
  (u,v,w) & \rightarrow & (vw-u,v,w) \ .
  \label{eq:transform-solution}
\end{eqnarray}

One can see that the above transformations are similar to Seiberg
transformations performed on quiver arrows (see rule 3 and 4 at the
end of Section~\ref{sec:seiberg-dual}). To be specific, the Seiberg
transformation rule on arrow numbers are,
\begin{eqnarray}
  (u,v,w) & \rightarrow & (-u,-v,w-uv) \ ,\nonumber\\
  (u,v,w) & \rightarrow & (-u,v-uw,-w) \ ,\nonumber\\
  (u,v,w) & \rightarrow & (u-vw,-v,-w) \ ,
  \label{eq:transform-arrow}
\end{eqnarray}
which differ from (\ref{eq:transform-solution}) only by a common factor $-1$.
This common factor can be interpreted as a reversal in the arrow directions.
So we can say, \emph{a Seiberg transformation generates a new quiver whose arrow
number combination is another solution to Markov equation} which means this new
quiver theory is automatically anomaly free. Since (\ref{eq:transform-solution})
generates all possible solutions, we can conclude that \emph{Seiberg transformation
generates all possible anomaly free dual quiver theories}. 

On the other hand,
since Seiberg transformation reverses arrows, if we choose the basic quiver
to be $Q_0$ whose cycles are all anti-clockwise as shown in Figure~\ref{fig:p2},
then all its descendants have their cycles being either clockwise or anti-clockwise.
Due to this fact all adjacency matrix $A$ will have one of the following two forms,
\begin{eqnarray}
\left( \begin{array}{ccc}
	0 & u & 0 \\ 0 & 0 & v \\ w & 0 & 0
	\end{array} \right)
\ , \qquad
\left( \begin{array}{ccc}
	0 & 0 & u \\ v & 0 & 0 \\ 0 & w & 0 .
	\end{array} \right)
\label{eq:adjacency}
\end{eqnarray}
In either case we have the same Ihara Zeta function $\zeta(z)$,
\begin{eqnarray}
	\zeta^{-1}(z) = {\rm det}(I-Az) = 1 - (uvw) z^3 .
\end{eqnarray}
Thus the poles of $\mathbb{P}^2$ Zeta function are
\begin{eqnarray}
  \frac{1}{\sqrt[3]{uvw}} ,\quad \frac{1}{\sqrt[3]{uvw}} \, e^{\pm i \frac{2\pi}{3}} .
  \label{eq:p2-poles}
\end{eqnarray}
This explicitly demonstrates that \emph{all poles after Seiberg transformation
lie on lines of third roots of unity} (see Figure~\ref{fig:P2-tree-poles}).
Moreover, since we have all the information on poles for all Seiberg transformed
quivers in this particular case, according to equation~\eqref{eq:p2poleandspectra},
we can see that all adjacency spectrum of each transformed quiver should also
lie on the same lines as the poles. 

It is interesting to remark that since
the original $\mathbb{P}^2$ quiver already has arrows between every pairs of
nodes, Seiberg transformation does not connect any previously unconnected
nodes to generate any new loops of length other than 3. As a consequence
of this, the reciprocal of Ihara Zeta functions of all duals only
have cubic terms in the complex variable. Indeed this property of Ihara
Zeta function is rather general and how it gathers information in terms of
simple cycles in a graph will be discussed in detail in Section~\ref{Zeta_Graph_Properties}.

\subsection{Ramanujan Condition for $\mathbb{P}^2$ and its Seiberg Duals}

Due to the fact that Ihara zeta function only has the analogue of Riemann Hypothesis for undirected graphs, \emph{i.e.} for $(q+1)$-regular undirected graph $G$, $\zeta_G(z)$ satisfies the Riemann Hypothesis if and only if $G$ is Ramanujan, and quivers under consideration being fully directed graph, it is only possible to check if the graph is Ramanujan.
It should be noted that Ramanujan condition for directed graph has an extra constraint. In addition to the definition in section~\ref{sec:2}, the adjacency matrix of directed graphs should be diagonalizable by unitary matrices. See~\cite{12} for more details in the definition.

First of all, a quiver being Ramanujan requires this quiver to be a regular graph (that is, all vertices have the same in-degree and out-degree $d$). Since only the singular solutions to (\ref{eq:dio1}) can have same numbers and only the basic solution has its three numbers the same, obviously only the basic quiver is a regular graph. Thus none of the other quivers can be Ramanujan. We then check whether this basic quiver, denoted earlier as $Q_0$, is so.
We see that here all the eigenvalues $\lambda$ have $|\lambda|=3$, which means $Q_0$ cannot be Ramanujan and thus violates the graph Riemann hypothesis.

\section{Example: Cone over the Zeroth Hirzebruch Surface}\label{s:f0}
Another well-studied example in both the physics and mathematics literature (independently) is the quiver which corresponds to the cone $F_0$ over the zeroth Hirzebruch surface, $\mathbb{F}_0 := \mathbb{P}^1\times\mathbb{P}^1$, which is a toric variety.
The archtypal $F_0$ quiver (we shall call it the {\it basic} quiver) can be visualized in Figure \ref{fig:p1xp1}.
\begin{figure}[!ht]
\centering
	\begin{tikzpicture}[->,>=stealth',shorten >= 1pt,auto,semithick,
	main node/.style={circle,fill=green},
	inverse/.style={<-,shorten <= 1pt, auto, semithick}]
		\node[main node]		(1)		at (-2,4)		{1};
		\node[main node]		(2)  	at (-2,0)		{2};
		\node[main node]		(3)  	at (2,0)			{3};
		\node[main node]		(4)  	at (2,4)			{4};
			
		\path 	(2)		edge		[inverse]		node		{2}		(1)
				(3)		edge		[inverse]		node		{2}		(2)
				(4)		edge		[inverse]		node		{2}		(3)
				(1)		edge		[inverse]		node		{2}		(4);
			
	\end{tikzpicture}
    \caption{{\sf Basic quiver for $\mathbb{P}^1\times\mathbb{P}^1$, vertices are labeled with $1,2,3,4$.}}
	\label{fig:p1xp1}
\end{figure}
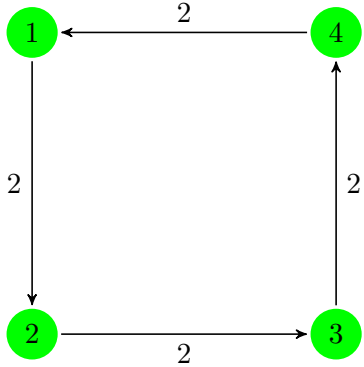

A series of quivers  -- we shall call them {\it general} $F_0$ quivers -- are generated by Seiberg dualizing on different vertices for a number of times and they are pictured in Figure \ref{fig:p1xp1-general}; indeed, this is the most general form of a 4-node quiver. We use the convention that a positive integer denotes a multiple of arrows while a negative means the same, but in the reverse direction. Perhaps the most famous dual quiver to the $(a,b,c,d,e,f) = (2,2,2,0,0,2)$ basic case is $(a,b,c,d,e,f) = (2,-2,-2,4,0,2)$ on dualizing on node 1, because both these quivers afford rank vectors which are all 1, i.e., they are both {\it toric}.

\begin{figure}[!ht]
\centering
	\begin{tikzpicture}[->,>=stealth',shorten >= 1pt,auto,semithick,
	main node/.style={circle,fill=green},
	inverse/.style={<-,shorten <= 1pt, auto, semithick}]
		\pgfmathsetmacro{\y}{4*sqrt(2)}
		\node[main node]    (1)    at (-2,4)    {1};
		\node[main node]    (2)    at (-2,0)    {2};
		\node[main node]    (3)    at (2,0)     {3};
		\node[main node]    (4)    at (2,4)     {4};
		\node[black,right]  (5)    at (-0.9,3)		{e};
		\node[black,right]  (6)    at (-0.9,1)		{d};
			
		\path 	(2)		edge		[inverse]		node		{c}		(1)
				(3)		edge		[inverse]		node		{a}		(2)
				(4)		edge		[inverse]		node		{f}		(3)
				(1)		edge		[inverse]		node		{b}		(4)
                (1)		edge		[inverse]		(3)
				(2)		edge		[inverse]		(4);
			
	\end{tikzpicture}
    \caption{{\sf General quiver emerging from dualizing the basic quiver.}}
	\label{fig:p1xp1-general}
\end{figure}
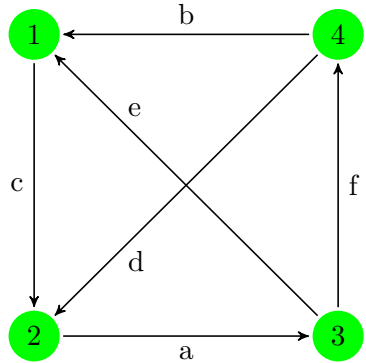

The antisymmetric adjacent matrix $q$ can be read off from this general quiver as
\begin{eqnarray}
  q = \left( \begin{array}{cccc}
    0 & c & -e & -b \\
    -c & 0 & a & -d \\
    e & -a & 0 & f \\
    b & d & -f & 0
  \end{array} \right).
  \label{eq:p1xp1-antisym-q}
\end{eqnarray}
On the one hand, to guarantee that the kernel of $q$ is not null so that one could have a sensible rank-vector according to anomaly cancellation in \eqref{anom}
we should impose the following condition (cf.~\cite{Hanany:2012mb})
\begin{eqnarray}
  \textrm{det}\,q = (ab+de-cf)^2 = 0 .
  \label{eq:no-null-kernel}
\end{eqnarray}
On the other hand, as a consequence of gauge anomaly cancellation,
a Diophantine equation is found \cite{Benvenuti:2004dw},
\begin{eqnarray}
  a^2 + b^2 + c^2 + d^2 + e^2 + f^2 + bcd + bef = ace + adf + abcf .
  \label{eq:p1xp1-dioph}
\end{eqnarray}
Physically one would expect every Seiberg dual theory of the basic quiver is an anomaly free theory,
i.e., each edge number combination in these dual quivers forms a solution to the
above two equations (\ref{eq:no-null-kernel}) and (\ref{eq:p1xp1-dioph}).

Yet, we seem to find that the converse proposition also holds. That is,
{\it all positive even solutions to these two equations can be generated by Seiberg duality starting from the basic quiver}.
By ``positive even solution'' we mean all variables in this solution
take positive and even integer values.
Indeed this a very complex set of multi-variable and non-linear Diophantine equations and we need to use (\ref{eq:no-null-kernel}) to get rid of one of the variables.
Firstly, it is easy to see that no solution can have $c=0$.
Suppose we have a solution with $c=0$, then $a$ or $b$ and $d$ or $e$ would have to be 0, the only possible non-trivial combination has to satisfy
\begin{equation}
b = c = e = 0 ,\quad a^2 + d^2 + f^2 = adf 
\ ,
\end{equation}
which reduces to the $\mathbb{P}^2$ quivers discussed above and is not the case we want to discuss here.
With this restriction, we have $f=(ab+de)/c$ from (\ref{eq:no-null-kernel}) and
equation (\ref{eq:p1xp1-dioph}) then reduces to a 5-variable Diophantine equation,
\begin{eqnarray}
  0 &=& a^2 b^2 + a^2 c^2 + b^2 c^2 + c^2 d^2 + c^2 e^2
  + d^2 e^2 + c^4 - a^2 b^2 c^2 + b c^3 d \nonumber\\
  && - a c^3 e + a b^2 c e - a^2 b c d + 2 abde
  - a c d^2 e + b c d e^2 - a b c^2 d e .
  \label{eq:p1xp1-reduced-dioph}
\end{eqnarray}

Instead of solving this equation analytically which is beyond our
reach we seek to find as many solutions as possible with computer power
and use these solutions to test our conjecture. The basic algorithm is
as follows.
\begin{itemize}
  \item[1] Generate all combinations of $(a,b,c,d,e)$ with any of the variables
    being a positive even integer within the range $[0, N]$;
  \item[2] Test each combination and select those that solve equation
    (\ref{eq:p1xp1-reduced-dioph}).
\end{itemize}
The result is, we have found all solutions with $N=800$
and all of them can be found in the theories of general quivers.
This is the analogue of the statement that Seiberg duality generates {\it all} solutions to the Markov equation and it would be nice to have an analytic proof of this fact.
For more detailed reference of our argument, we have put all results and the source code at
\href{https://github.com/dayzhou/Seiberg-Diophantine}{GitHub}\footnote{{\tt https://github.com/dayzhou/Seiberg-Diophantine}.}.

\subsection{Ramanujan Condition for $F_0$ and its Seiberg Duals}
As with the $dP_0$ case, we can check explicitly if all the dual quivers generated from the basic $\mathbb{P}^1 \times \mathbb{P}^1$ quiver are Ramanujan.
First, let us check the basic case. The adjacency matrix there is
\begin{eqnarray}
Q_0 = \left( 
\begin{array}{cccc}
0 & 2 & 0 & 0 \\
0 & 0 & 2 & 0 \\
0 & 0 & 0 & 2 \\
2 & 0 & 0 & 0
\end{array}
\right) \ ,
\end{eqnarray}
and the poles are computed explicitly to be $\pm \frac{1}{2}$ and $\frac{1}{2} \, e^{\pm i \frac{\pi}{2}}$, and the modulus of all eigenvalues are thus seen to be $|\lambda| = 2$. Therefore the basic case is again not Ramanujan.

For the Seiberg duals generated from $\mathbb{P}^1 \times \mathbb{P}^1$, it is also checked in~\ref{eq:p1xp1-reduced-dioph} that the only regular graphs are those with the same Ihara zeta function, $(1-16z^4)^{-1}$. This is equivalent to say that the only regular graphs in $\mathbb{P}^1 \times \mathbb{P}^1$ and its duals are $\mathbb{P}^1 \times \mathbb{P}^1$ itself and the one that has all its arrow in opposite direction compared to $\mathbb{P}^1 \times \mathbb{P}^1$.

~\\

\section{Zeta Poles and Gauge Theories: A Plethora of Examples}\label{s:eg}
With our experience now in the $dP_0$ and $F_0$ examples, we can proceed to study the plethora of quiver gauge theories known to the literature.
We will present pole plots generated from Seiberg duals of $F_0$, $dP_1$, $dP_2$ and $dP_3$ quivers. Since an analytical result for poles of $dP_0$ and its duals was obtained in section~\ref{sec:p2_pole}, it is omitted here.
These Hirzebruch and del Pezzo theories \cite{Feng:2000mi} have since become canonical examples of studying (toric) quiver gauge theories and for the reader's convenience we tabulate the relevant (basic toric) quivers and their associated Ihara zeta functions here (cf.~\cite{He:2011ge}) in Table~\ref{table:toric_zeta}.

\begin{center}
  \begin{longtable}{|C{2cm}|C{5cm}|C{5cm}|}
\hline

\multicolumn{1}{|c}{Gauge Theory} & \multicolumn{1}{c|}{Toric Phases}
& \multicolumn{1}{c|}{Ihara Zeta Function} \\
\endfirsthead

\hline
\multicolumn{1}{|c|}{Gauge Theory} & \multicolumn{1}{c|}{Toric Phases}
& \multicolumn{1}{c|}{Ihara Zeta Function} \\
\endhead

\multicolumn{1}{|c|}{} & \multicolumn{2}{r|}{\it\footnotesize{Continued on next page $\rightarrow$}} \\
\hline
\endfoot

\endlastfoot

\hline

$\mathbb{P}^2$ 						& \begin{tikzpicture}[->,>=stealth',shorten >= 1pt,auto,inner sep = 0,semithick,
	main node/.style={circle,fill=red},
	inverse/.style={<-,shorten <= 1pt, auto, semithick}]
		\pgfmathsetmacro{\y}{2*sin(60)}
		\pgfmathsetmacro{\yd}{sin(60)}
		\node[main node]		(1)		at (-2,\y)		{A};
		\node[main node]		(2)  	at (-3,0)		{B};
		\node[main node]		(3)  	at (-1,0)	 	{C};
			
		\path 	(2)		edge		[inverse]		node		{3}		(1)
				(3)		edge		[inverse]		node		{3}		(2)
				(1)		edge		[inverse]		node		{3}		(3);
				\end{tikzpicture} & $\zeta^{-1}(z) = 1-27z^3$ \\ \hline

\multirow{2}{*}{$\mathbb{P}^1 \times \mathbb{P}^1$} 	& 
\begin{tikzpicture}[->,>=stealth',shorten >= 1pt,auto,semithick,inner sep=0,
	main node/.style={circle,fill=red},
	inverse/.style={<-,shorten <= 1pt, auto, semithick}]
		\node[main node]		(1)		at (-1,2)		{A};
		\node[main node]		(2)  	at (1,2)		{B};
		\node[main node]		(3)  	at (1,0)		{C};
		\node[main node]		(4)  	at (-1,0)		{D};
			
		\path 	(1)		edge		node		{2}		(2)
				(2)		edge		node		{2}		(3)
				(3)		edge		node		{2}		(4)
				(4)		edge		node		{2}		(1);
			
	\end{tikzpicture} &  $\zeta^{-1}(z) = 1-16z^4$ \\ \cline{3-3}
& \begin{tikzpicture}[->,>=stealth',shorten >= 1pt,auto,semithick,inner sep=0,
	main node/.style={circle,fill=red},
	inverse/.style={<-,shorten <= 1pt, auto, semithick}]
		\node[main node]		(1)		at (-1,2)		{A};
		\node[main node]		(2)  	at (1,2)		{B};
		\node[main node]		(3)  	at (1,0)		{C};
		\node[main node]		(4)  	at (-1,0)		{D};
			
		\path 	(1)		edge		node		{2}		(2)
				(2)		edge		node		{2}		(3)
				(3)		edge		[inverse] node		{2}		(4)
				(4)		edge		[inverse] node		{2}		(1)
				(1)		edge 		[inverse] node 		{4}		(3);
			
	\end{tikzpicture}& $\zeta^{-1}(z) = 1-32z^3$\\ \hline
$dP_1$ 						& \begin{tikzpicture}[->,>=stealth',shorten >= 1pt,auto,semithick,inner sep=0,
	main node/.style={circle,fill=red},
	inverse/.style={<-,shorten <= 1pt, auto, semithick}]
		\node[main node]		(1)		at (-1,2)		{A};
		\node[main node]		(2)  	at (1,2)		{B};
		\node[main node]		(3)  	at (1,0)		{D};
		\node[main node]		(4)  	at (-1,0)		{C};
		\node[black,right]  	(5)    at (-0.5,1.6)		{1};
		\node[black,right]  	(6)    at (-0.5,0.4)		{1};
			
		\path 	(1)		edge		node		{1}		(2)
				(2)		edge		node		{2}		(3)
				(4)		edge		[inverse] node		{3}		(3)
				(1)		edge		[inverse] node		{2}		(4)
				(1)		edge 		node {} 	(3)
				(4)		edge 		node {} 	(2);
			
	\end{tikzpicture} &   $\zeta^{-1}(z) = 1-12z^3-12z^4$\\ \hline

\multirow{2}{*}{$dP_2$} 						& \begin{tikzpicture}[->,>=stealth',shorten >= 1pt,auto,inner sep = 0,semithick,
	main node/.style={circle,fill=red},
	inverse/.style={<-,shorten <= 1pt, auto, semithick}]
		\pgfmathsetmacro{\c}{2*cos(72)}
		\pgfmathsetmacro{\d}{2*cos(36)}
		\pgfmathsetmacro{\s}{2*sin(72)}
		\pgfmathsetmacro{\t}{2*sin(144)}
		\pgfmathsetmacro{\e}{1.2*cos(72)}
		\node[main node]		(1)		at (0,2)			{A};
		\node[main node]		(2)  	at (\s,\c)			{B};
		\node[main node]		(3)  	at (\t,-\d)	 		{C};
		\node[main node]		(4)  	at (-\t,-\d)	 	{D};
		\node[main node]		(5)  	at (-\s,\c)	 		{E};
		\node[black,right]  	(6)    at (0.5,\c)			{3};
		\node[black,right]  	(7)    at (-0.65,\c)		{1};
		\node[black,right]  	(8)    at (-1.3,\e)		{2};
		\node[black,right]  	(9)    at (0.3,-.3)		{2};

		\path 	(1)		edge		node		{1}		(2)
				(5)		edge		[inverse] node		{1}		(1)
				(1) 	edge 		node 		{}		(4)
				(2)		edge		node		{2}		(3)
				(3)		edge 		node 		{1} 	(4)
				(3) 	edge 		node 		{} 		(1)
				(4)		edge 		node 		{1} 	(5)
				(4) 	edge 		node 		{} 		(2)
				(5) 	edge 		node 		{} 		(3);
				\end{tikzpicture} & $\zeta^{-1}(z) = 1-16z^3-12z^4$\\ \cline{3-3}
& \begin{tikzpicture}[->,>=stealth',shorten >= 1pt,auto,inner sep = 0,semithick,
	main node/.style={circle,fill=red},
	inverse/.style={<-,shorten <= 1pt, auto, semithick}]
		\pgfmathsetmacro{\c}{2*cos(72)}
		\pgfmathsetmacro{\d}{2*cos(36)}
		\pgfmathsetmacro{\s}{2*sin(72)}
		\pgfmathsetmacro{\t}{2*sin(144)}
		\node[main node]		(1)		at (0,2)			{A};
		\node[main node]		(2)  	at (\s,\c)			{B};
		\node[main node]		(3)  	at (\t,-\d)	 		{C};
		\node[main node]		(4)  	at (-\t,-\d)	 	{D};
		\node[main node]		(5)  	at (-\s,\c)	 		{E};
		\node[black,right]		(6)		at(-.25,1.1)		{1};
		\node[black,right]		(7)		at(-1.1,.8)			{2};
		\node[black,right]		(8)		at(.9,-.6)			{1};
		\node[black,right]		(8)		at(0,-.65)			{1};

		\path 	(1)		edge		node		{1}		(2)
				(1)		edge		node		{}		(3)
				(2) 	edge 		node 		{}		(5)
				(2)		edge		[inverse] node		{1}		(3)
				(3)		edge 		[inverse] node 		{1} 	(4)
				(3)		edge 		node 		{} 		(5)
				(4) 	edge 		node 		{} 		(1)
				(4)		edge 		[inverse] node 		{2} 	(5)
				(5) 	edge 		node 		{1} 		(1);
				\end{tikzpicture}& $\zeta^{-1}(z) = 1 - 5z^3 - 12z^4 - 4z^5$\\ \hline

\multirow{4}{*}{$dP_3$}
& \begin{tikzpicture}[->,>=stealth',shorten >= 1pt,auto,inner sep = 0,semithick,
	main node/.style={circle,fill=red},
	inverse/.style={<-,shorten <= 1pt, auto, semithick}]
		\pgfmathsetmacro{\c}{2*sin(60)}
		\node[main node]		(1)		at (-2,0)			{F};
		\node[main node]		(2)  	at (-1,\c)			{A};
		\node[main node]		(3)  	at (1,\c)	 		{C};
		\node[main node]		(4)  	at (2,0)	 		{B};
		\node[main node]		(5)  	at (1,-\c)	 		{D};
		\node[main node]		(6)  	at (-1,-\c)	 		{E};
		\node[black,right]		(7)		at (1,.5)			{1};
		\node[black,right]		(8)		at (1,-.5)			{1};
		\node[black,right]		(9)		at (.3,-.5)			{1};
		\node[black,right]		(10)	at (-.6,-.5)		{1};
		\node[black,right]		(11)	at (-.9,0)			{2};
		\node[black,right]		(12)	at (-1.5,-.5)		{1};
		\node[black,right]		(13)	at (-.3,1.1)		{1};

		\path 	(1)		edge		[inverse] node		{1}		(2)
				(2)		edge		node		{1}		(3)
				(3) 	edge 		node 		{1}		(4)
				(4)		edge		node		{1}		(5)
				(5)		edge 		node 		{1} 	(6)
				(6)		edge 		[inverse] node 		{1} 	(1)
				(1)		edge 		node 		{}		(5)
				(2)		edge		node		{}		(4)
				(3)		edge		node		{}		(1)
				(4)		edge		node		{}		(6)
				(5)		edge		node		{}		(2)
				(6)		edge		node		{}		(2)
				(6)		edge		node		{}		(3);
				\end{tikzpicture} & $\zeta^{-1}(z) = 1 - 8z^3 - 12z^4-4z^5$\\ \cline{2-3}
&\begin{tikzpicture}[->,>=stealth',shorten >= 1pt,auto,inner sep = 0,semithick,
	main node/.style={circle,fill=red},
	inverse/.style={<-,shorten <= 1pt, auto, semithick}]
		\pgfmathsetmacro{\c}{2*sin(60)}
		\node[main node]		(1)		at (-2,0)			{F};
		\node[main node]		(2)  	at (-1,\c)			{A};
		\node[main node]		(3)  	at (1,\c)	 		{C};
		\node[main node]		(4)  	at (2,0)	 		{B};
		\node[main node]		(5)  	at (1,-\c)	 		{D};
		\node[main node]		(6)  	at (-1,-\c)	 		{E};
		\node[black,right]		(7)		at (1.05,0)			{1};
		\node[black,right]		(8)		at (.4,1)			{1};
		\node[black,right]		(9)		at (-.6,1)			{1};
		\node[black,right]		(10)	at (.4,-.8)			{1};
		\node[black,right]		(11)	at (-.6,-.8)		{1};
		\node[black,right]		(12)	at (-1.2,0)			{1};

		\path 	(1)		edge		[inverse] node		{1}		(2)
				(2)		edge		node		{1}		(3)
				(3) 	edge 		[inverse] node 		{1}		(4)
				(4)		edge		[inverse]node		{1}		(5)
				(5)		edge 		node 		{1} 	(6)
				(6)		edge 		[inverse] node 		{1} 	(1)
				(1)		edge 		node 		{}		(5)
				(3)		edge		node		{}		(1)
				(3)		edge		node		{}		(5)
				(4)		edge		node		{}		(2)
				(6)		edge		node		{}		(2)
				(6)		edge		node		{}		(4);
				\end{tikzpicture}& $\zeta^{-1}(z) = 1 - 2z^3 - 9z^4-6z^5$\\ \cline{2-3}
& \begin{tikzpicture}[->,>=stealth',shorten >= 1pt,auto,inner sep = 0,semithick,
	main node/.style={circle,fill=red},
	inverse/.style={<-,shorten <= 1pt, auto, semithick}]
		\pgfmathsetmacro{\c}{2*sin(60)}
		\node[main node]		(1)		at (-2,0)			{F};
		\node[main node]		(2)  	at (-1,\c)			{A};
		\node[main node]		(3)  	at (1,\c)	 		{C};
		\node[main node]		(4)  	at (2,0)	 		{B};
		\node[main node]		(5)  	at (1,-\c)	 		{D};
		\node[main node]		(6)  	at (-1,-\c)	 		{E};
		\node[black,right]		(7)		at (1,.8)			{2};
		\node[black,right]		(8)		at (-1.2,-.9)		{2};
		\node[black,right]		(9)		at (.6,-1)			{1};
		\node[black,right]		(10)	at (-.4,-.6)		{1};
		\node[black,right]		(11)	at (-1.4,.6)	{1};
		\node[black,right]		(12)	at (-1.6,-.4)		{1};

		\path 	(1)		edge		[inverse] node		{2}		(2)
				(2)		edge		[inverse] node		{1}		(3)
				(3) 	edge 		[inverse] node 		{1}		(4)
				(4)		edge 				  node		{1}		(5)
				(5)		edge 				  node 		{1} 	(6)
				(1)		edge 				  node 		{}		(3)
				(1)		edge				  node 		{}		(5)
				(2)		edge				  node 		{}		(4)
				(3)		edge				  node 		{}		(6)
				(5)		edge				  node 		{}		(2)
				(6)		edge				  node 		{}		(2);
              \end{tikzpicture} & $\zeta^{-1}(z) = 1 - 8z^3 - 16z^4$\\ \cline{2-3}
&\begin{tikzpicture}[->,>=stealth',shorten >= 1pt,auto,inner sep = 0,semithick,
	main node/.style={circle,fill=red},
	inverse/.style={<-,shorten <= 1pt, auto, semithick}]
		\pgfmathsetmacro{\c}{2*sin(60)}
		\node[main node]		(1)		at (-2,0)			{F};
		\node[main node]		(2)  	at (-1,\c)			{A};
		\node[main node]		(3)  	at (1,\c)	 		{C};
		\node[main node]		(4)  	at (2,0)	 		{B};
		\node[main node]		(5)  	at (1,-\c)	 		{D};
		\node[main node]		(6)  	at (-1,-\c)	 		{E};
		\node[black,right]		(7)		at (1,.8)			{2};
		\node[black,right]		(8)		at (-1.2,-.9)		{2};
		\node[black,right]		(9)		at (.6,-1)			{3};
		\node[black,right]		(10)	at (-.4,-.6)		{1};
		\node[black,right]		(11)	at (-1.4,.6)		{1};
		\node[black,right]		(12)	at (-1.6,-.4)		{1};

		\path 	(1)		edge		[inverse] node		{2}		(2)
				(2)		edge		[inverse] node		{3}		(3)
				(3) 	edge 		[inverse] node 		{1}		(4)
				(4)		edge 				  node		{1}		(5)
				(5)		edge 		[inverse] node 		{1} 	(6)
				(1)		edge 				  node 		{}		(3)
				(1)		edge				  node 		{}		(5)
				(2)		edge				  node 		{}		(4)
				(6)		edge				  node 		{}		(3)
				(5)		edge				  node 		{}		(2)
				(2)		edge				  node 		{}		(6);
				\end{tikzpicture} & $\zeta^{-1}(z) = 1 - 36z^3$ \\ \hline
\caption{{\sf Ihara zeta functions for various toric phases of del Pezzo and Hirzebruch quivers.}}
\label{table:toric_zeta}
\end{longtable}
\end{center}

While in the above table we have distinguished the toric phases because they are most popular, there is no reason to restrict to them here.
Indeed, we compute all the zeta functions along the duality tree for all the above geometries and study the pole structure thereof at once.
These are presented in Figure \ref{fig:pole_plots}.
\begin{figure}[!ht]
\begin{tabular}{cc}
\subfloat[Pole plot for $\mathbb{P}^1 \times \mathbb{P}^1$ quiver and its duals.]{\includegraphics[width=.5\textwidth,keepaspectratio=true]{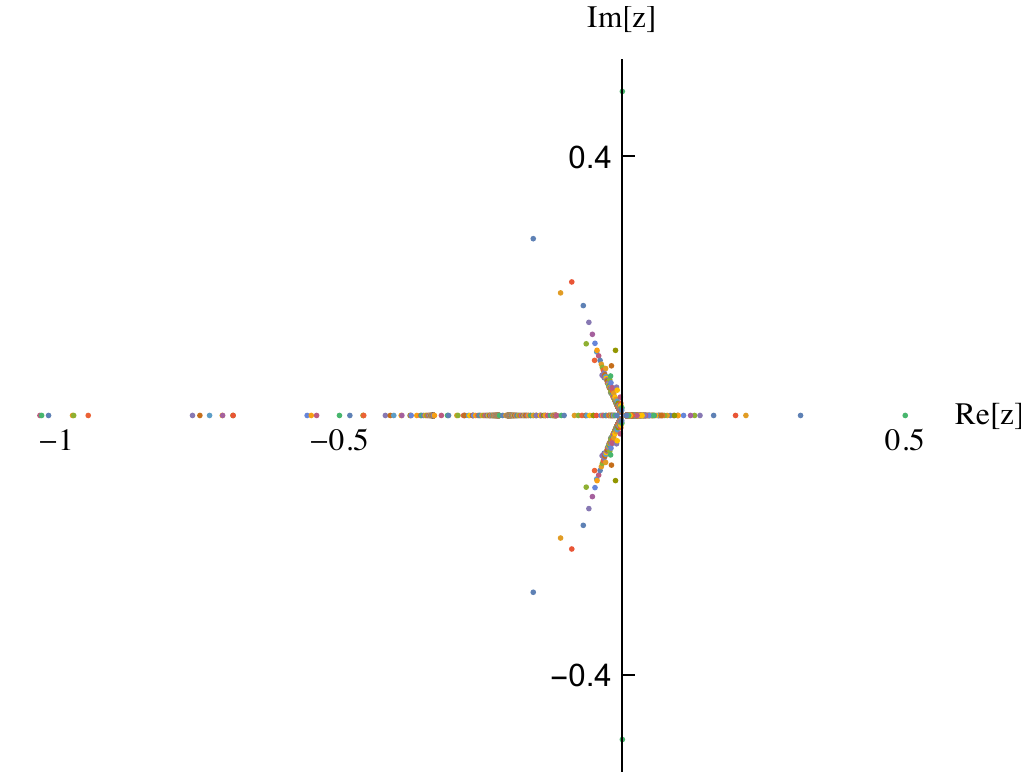}} &
\subfloat[Pole plot for $dP_1$ quiver and its duals.]{\includegraphics[width=.5\textwidth,keepaspectratio=true]{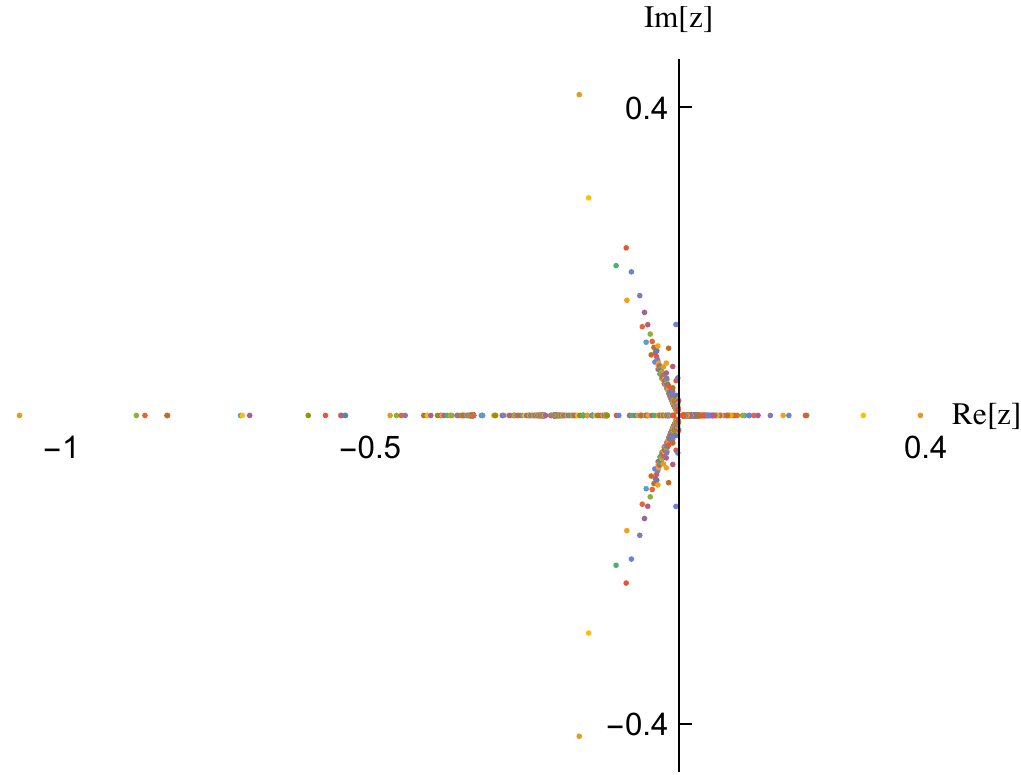}}\\
\subfloat[Pole plot for $dP_2$ quiver and its duals.]{\includegraphics[width=.5\textwidth,keepaspectratio=true]{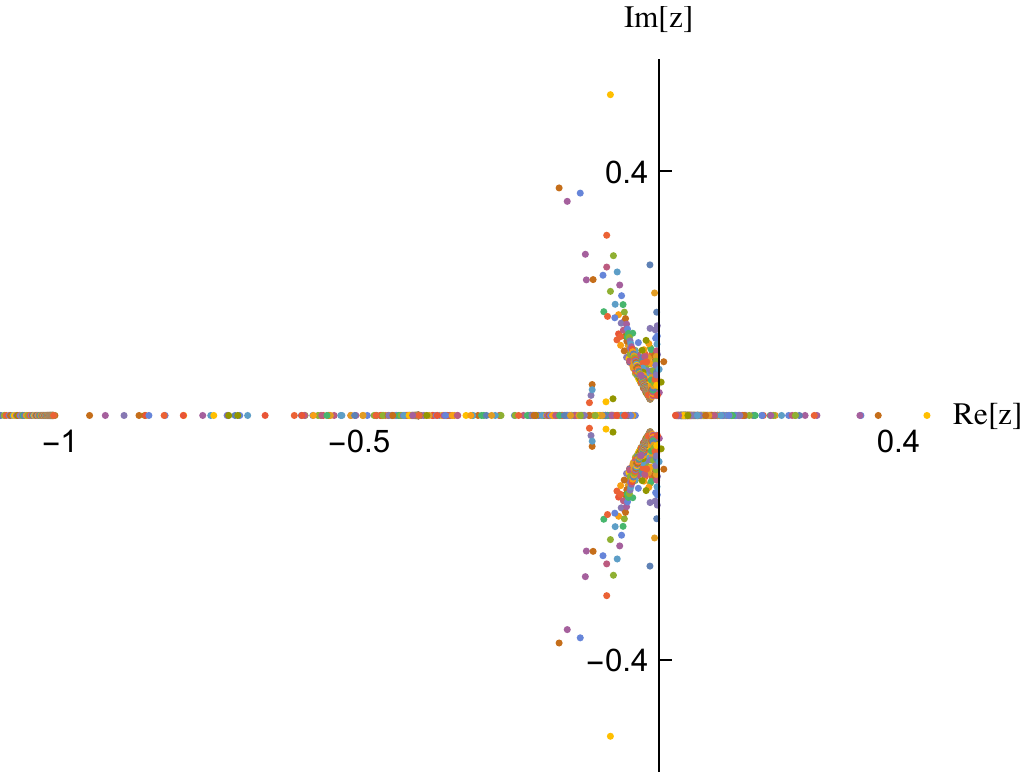}} &
\subfloat[Pole plot for $dP_3$ quiver and its duals.]{\includegraphics[width=.5\textwidth,keepaspectratio=true]{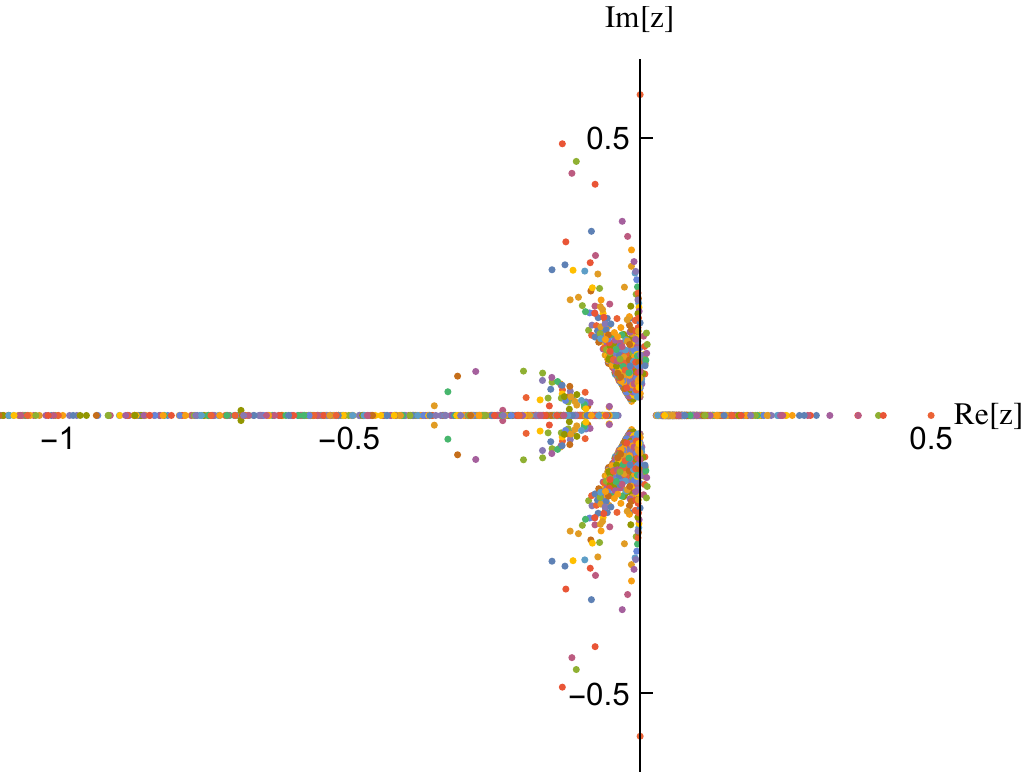}}
\end{tabular}
\caption{{\sf Ihara zeta function pole plots for $F_0$, $dP_1$, $dP_2$ and $dP_3$ quivers and their Seiberg duals.}}
\label{fig:pole_plots}
\end{figure}
The figure is generated by dualizing on every vertex of the original quivers and then repeating the process for every dual quivers generated from the previous stage, following along the complicated {\sf duality trees} (See~\cite{Franco:2003ea}). Furthermore, due to the fact that the coefficients of the inverse Ihara zeta functions of dual quivers get very large very quickly (\emph{e.g.} for $\mathbb{P}^2$, Don Zagier conjectured that the $n$th Markov number is asymptotically given by $m_n = \frac{1}{3} e^{C\sqrt{n} + \mathcal{O}(1)}$ with $C = 2.3523\ldots$) we only look at dual quivers with arrow number smaller than 80.
 
The plots of $F_0$ and $dP_1$ are very similar and it can be explained due using results from section~\ref{Zeta_Graph_Properties}. Since the coefficients of the $z^n$ term in the inverse of Ihara zeta function expansion counts (we will return to discuss this point in detail in the following section) the number of length $n$ simple cycle if there does not exist smaller disjoint simple cycles with length $n_1$ and $n_2$ such that $n=n_1+n_2$. Hence in the $F_0$ and $dP_1$ dual quivers, the highest order term in the inverse of zeta function is $z^4$ as there are only 4 vertices in total and it not possible to form length-5 simple cycle or higher length simple cycle using disjoint lower length simple cycles. Therefore the inverse of their Ihara zeta functions should all have the form $1+az^3+bz^4$, where $a,b \in \mathbb{Z}$. The solutions to these equations therefore reflects the shape of pole plots of $F_0$ and $dP_1$. Similarly, $dP_2$ can not have term of order higher than 5 in their expansions and $dP_3$ can not have term of order higher than 6 as well.

Since all of these toric phases can be obtained from Seiberg transformation, their poles are also plotted in figure~\ref{fig:pole_plots}. It is also interesting to note that the second Phase of $dP_3$ is the only regular graph in all of Seiberg duals of $dP_3$. However, its adjacency matrix is not diagonalizable by a unitary matrix, and it is therefore not Ramanujan.

\subsection{A Numerical Experiment on Graph Riemann Hypothesis}\label{s:numerical_riemann}
As we saw in the introductory sections that the Ramanujan property is intimately linked to the graph Riemann Hypothesis.
In this subsection, we shall explore certain possibilities in applying both strong and weak Graph Riemann Hypothesis (RH) to our toric phases of del Pezzo quivers. Even though there is no immediate definition of RH pertaining to directed graphs due to our lack of an functional equation with reflection symmetry about $s =\frac{1}{2}$, it is still interesting to see how certain setups of RH for undirected graphs select special theories from Seiberg duals of del Pezzo quivers. In this perspective, we have the following setup:
\begin{enumerate}
\item The degree $q + 1$ is defined to be the maximum degree among all in-degrees and out-degrees of all vertices. This definition is in analogous spirit to that of Eq.~\eqref{def:weak_RH}, where the definition of weak RH is given for undirected irregular graph.
\item Dual quivers are selected if they satisfy either strong \emph{or} weak RH.
\item We then look at distinct prime factors for each term in $\zeta^{-1}$ for each dual quivers satisfying either two versions of RH. Especially if they form any interesting prime sequences.
\item Finally we look at how RH selects dual quivers in the duality tree. 
\end{enumerate}

First let us consider $\mathbb{P}^2$ case, since it was shown in section \ref{s:p2} that all of its Seiberg duals have poles lying on lines of cubic root of unity or having same modulus, we immediately see that both versions of RH are trivially satisfied. 
\begin{table}[htbp]
\centering
\begin{tabular}{|C{2.5cm}||C{1cm}|C{6.5cm}|C{4cm}|}
\hline
Base Quiver & Level &$\zeta^{-1}$ RH Duals& Prime Sequence \\
\hline
$\mathbb{P}^1 \times \mathbb{P}^1$ & 5 &\shortstack{$1-16 z^4,1-32 z^3,$\\$1-96 z^3,1-1056 z^3,$\\$1-480 z^3,1-14432 z^3,$\\$1-2720 z^3,1-28320 z^3,$\\$1-68640 z^3,1-200736 z^3$ }& $z^3$: 2,3,5,11,13,17,41,59\\
\hline
$dP1$ & 5 &None & N/A \\
\hline
\shortstack{$dP_2$ \\Toric Phase 1 }& 5 &\shortstack{$1-35 z^3-80 z^4-25 z^5,$\\$1-528 z^3-3082 z^4-2480 z^5,$\\$1-366 z^3-1594 z^4-700 z^5,$\\$1-984 z^3-8536 z^4-12040 z^5$} & \shortstack{$z^3$:  2,3,5,7,11,41,61\\$z^4$: 2,5,11,23,67,97,797\\$z^5$: 2,5,7,31,43}\\
\hline
\shortstack{$dP_2$ \\ Toric Phase 2 }& 5 &\shortstack{$1-35 z^3-80 z^4-25 z^5,$\\$1-366 z^3-1594 z^4-700 z^5,$\\$1-528 z^3-3082 z^4-2480 z^5,$\\$1-732 z^3-5328 z^4-5994 z^5$} & \shortstack{$z^3$:  2,3,5,7,11,61\\$z^4$: 2,3,5,23,37,67,797\\$z^5$: 2,3,5,7,31,37}\\
\hline
\shortstack{$dP_3$ \\Toric Phase 1 }& 4 &\shortstack{$1-36 z^3,1-25 z^3-61 z^4-30 z^5-4 z^6,$\\$1-8 z^3-64 z^4,1-72 z^3,$\\$1-180 z^3,1-12 z^3-92 z^4-40 z^5,$\\$1-173 z^3-523 z^4-96 z^5,1-792 z^3,$\\$1-205 z^3-806 z^4-531 z^5-90 z^6,$\\$1-77 z^3-169 z^4-90 z^5$} & \shortstack{$z^3$: 2,3,5,11,41,173\\$z^4$: 2,13,23,31,61,523\\$z^5$: 2,3,5,59\\$z^6$: 2,3,5}\\
\hline
\shortstack{$dP_3$\\ Toric Phase 2} & 5 &\shortstack{$1-36 z^3,1-25 z^3-61 z^4-30 z^5-4 z^6,$\\$1-8 z^3-64 z^4,1-72 z^3,$\\$1-180 z^3, 1-12 z^3-92 z^4-40 z^5,$\\$1-205 z^3-806 z^4-531 z^5-90 z^6,$\\$1-173 z^3-523 z^4-96 z^5,1-792 z^3$} & \shortstack{$z^3$: 2,3,5,11,143,173\\$z^4$: 2,13,23,31,61,523\\$z^5$: 2,3,5,59\\$z^6$: 2,3,5}\\
\hline
\shortstack{$dP_3$\\Toric Phase 3} & 4 & \shortstack{$1-36 z^3,1-180 z^3,$\\$1-25 z^3-61 z^4-30 z^5-4 z^6,$\\$1-8 z^3-64 z^4,1-72 z^3,$\\$1-320 z^3-1688 z^4-1612 z^5-280 z^6,$\\$1-12 z^3-92 z^4-40 z^5,$\\$1-173 z^3-523 z^4-96 z^5,$\\$1-2520 z^3,1-792 z^3$} & \shortstack{$z^3$: 2,3,5,7,11,173\\$z^4$: 2,23,61,211,523\\$z^5$: 2,3,5,13,31\\$z^6$: 2,5,7}\\
\hline
\shortstack{$dP_3$\\Toric Phase 4} & 3 & \shortstack{$1-30 z^3,1-186 z^3,$\\$1-9 z^3-18 z^4,1-6 z^3-12 z^4,$\\$1-51 z^3,1-4 z^3-14 z^4-6 z^5,$\\$1-6 z^3-6 z^4-12 z^5,$\\$1-2703 z^3,1-16 z^3-44 z^4-12 z^5,$\\$1-22 z^3,1-31 z^3-80 z^4-36 z^5,$\\$1-543 z^3,1-42 z^3-130 z^4-28 z^5,$\\$1-5 z^3-9 z^4-7 z^5-2 z^6$} & \shortstack{$z^3$: 2,3,5,7,11,\\17,31,53,181\\$z^4$: 2,3,5,7,11,13 \\$z^5$: 2,3,5,7\\$z^6$: 2}\\
\hline
\end{tabular}
\caption{{\sf Prime sequences in this table are produced for each term in $\zeta^{-1}$ expansion. Level specifies the level in duality tree up to which the dualization is performed.}}
\label{table:RH_exp}
\end{table}

We move on to the higher del Pezzo quivers and tabulate the relevant experimental results.
In Table \ref{table:RH_exp}, we see that toric phases have similar prime factor sequences for each term in $\zeta^{-1}$. This should be expected as toric phases can be reached by Seiberg transformation from each other. The difference in this case can be due to the fact that the dualization started from different quiver and if this process is continued long enough, sequences of different toric phases should converge to the same one. Specifically, the dual $\mathbb{P}^1 \times \mathbb{P}^1$ quivers that satisfy RH are those with $z^3$ terms only in its inverse Ihara zeta functions. This is also expected as those with only $z^3$ terms reduce to $\mathbb{P}^2$ case, which trivially satisfies both RHs.

It is more illustrative to look at how RH selects quivers in the $\mathbb{P}^1 \times \mathbb{P}^1$ duality tree, which we include from \cite{Franco:2003ea} for reader's convenience.

\begin{figure}[!ht]
\begin{tabular}{cc}
\subfloat[Duality tree for $\mathbb{P}^1 \times \mathbb{P}^1$ theory.]{\includegraphics[width=.5\textwidth,keepaspectratio=true]{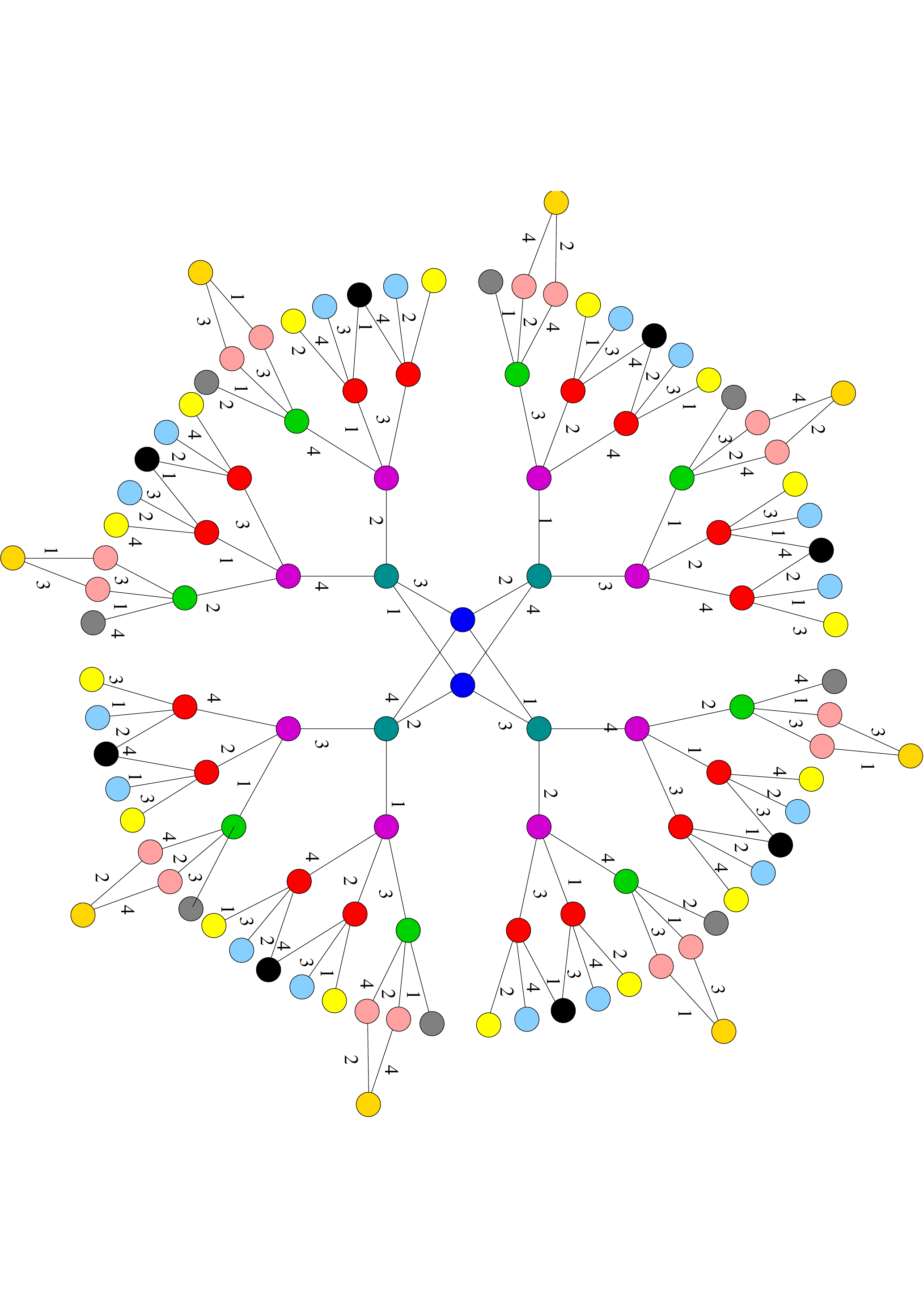}} &
\subfloat[Quiver Rules for $\mathbb{P}^1 \times \mathbb{P}^1$ duality tree.]{\includegraphics[width=.5\textwidth,keepaspectratio=true]{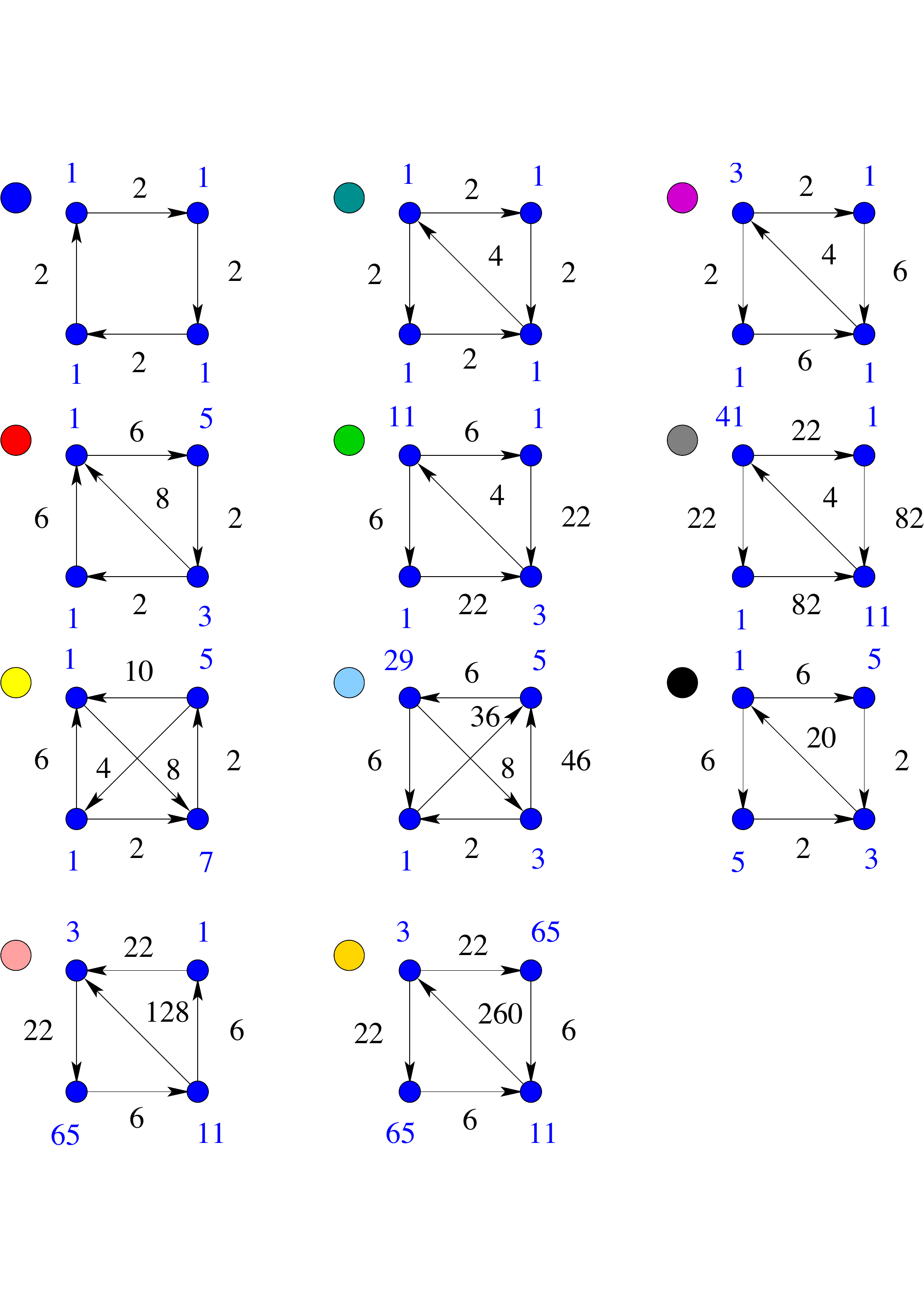}}
\end{tabular}
\caption{{\sf Duality tree and corresponding quiver rules for $F_0$.}}
\label{fig:P1xP1_duality_tree}
\end{figure}

Since we have all our $\mathbb{P}^1 \times \mathbb{P}^1$ dual quivers satisfying RH to contain terms in $z^3$ only in table \ref{table:RH_exp}. It can be seen from figure \ref{fig:P1xP1_duality_tree} that in our setup, RH only chooses the node without repetition of colors. In our plot, this corresponds to a path that takes the sequence of colors such as blue $\rightarrow$ cyan $\rightarrow$ purple $\rightarrow$ green $\rightarrow$ grey.
\section{Ihara Zeta Function, Quivers and Superpotentials}
\label{Zeta_Graph_Properties}

In this section we focus on a proof of a proposition presented below, which reinterprets coefficients of
the inverse of Ihara zeta function in terms of simple cycles. In terms of gauge theories, this translates to
generic superpotentials that can be generated from certain quivers.
\begin{proposition}
For a fully directed quiver $G$ with no self-adjoint loops, the reciprocal of its Ihara zeta function is the generator for simple loops in the sense that
\begin{equation}
 \zeta_G^{-1}(z) = \sum_{k=0}^n \left(
\sum_{\{j_1,\cdots,j_k\}} \sum_{\sum_i l_i=k} \prod_{a} (-[a])
\right) z^k \ ,
  \label{eq:final-ck}
\end{equation}
where $[a]$ is the number of ways to walk through a simple cycle in a particular
vertex sequence.
\label{thm:final-ck}
\end{proposition}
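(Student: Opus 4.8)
The plan is to start from the closed form already recorded for fully directed graphs, namely $\zeta_G^{-1}(z) = \det(I - Az)$ from Eq.~\eqref{eq:fully-directed-zeta}, and to unpack the right-hand side combinatorially. Writing $\det(I-Az) = \prod_{i}(1 - \lambda_i z)$ in terms of the eigenvalues $\lambda_i$ of $A$, the coefficient of $z^k$ is $(-1)^k$ times the $k$-th elementary symmetric polynomial in the $\lambda_i$, which in turn equals $(-1)^k$ times the sum of all $k \times k$ principal minors of $A$. Thus
\[
\zeta_G^{-1}(z) = \sum_{k=0}^{n} (-1)^k \left(\sum_{\{j_1,\dots,j_k\}} \det A[\{j_1,\dots,j_k\}] \right) z^k,
\]
where $A[S]$ denotes the principal submatrix on the vertex set $S$. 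This already produces the outer sum over $k$ and the sum over $k$-subsets of vertices appearing in the statement.

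Next I would expand each principal minor via the Leibniz permutation formula, $\det A[S] = \sum_{\sigma}\mathrm{sgn}(\sigma)\prod_{j\in S} a_{j\,\sigma(j)}$, and observe that a term survives only when every factor $a_{j\,\sigma(j)}$ is nonzero, \emph{i.e.}\ when each vertex of $S$ is sent by $\sigma$ to a vertex it points to. Such permutations are exactly the vertex-disjoint directed cycle covers of $S$; the cycle type of $\sigma$ records a partition of the chosen vertices into directed cycles of lengths $l_i$ with $\sum_i l_i = k$, which supplies the middle sum in the statement. Because $a_{j\,\sigma(j)}$ counts the arrows from $j$ to $\sigma(j)$, the product of these entries along a single cycle counts the number of ways to traverse that cycle by selecting a specific arrow at each step, which is precisely the quantity $[a]$.

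The remaining work is the sign bookkeeping. A permutation whose cycle type has parts $l_1,\dots,l_m$ (with $\sum_i l_i = k$) satisfies $\mathrm{sgn}(\sigma) = \prod_i (-1)^{l_i-1} = (-1)^{k-m}$, so the prefactor $(-1)^k$ combines to give $(-1)^k(-1)^{k-m} = (-1)^m$. Distributing one factor of $-1$ to each of the $m$ cycles converts $(-1)^m \prod_a [a]$ into $\prod_a(-[a])$, matching the product in Eq.~\eqref{eq:final-ck}. The hypothesis of no self-adjoint loops guarantees $a_{jj}=0$, so no length-one cycles occur and the decomposition involves only genuine simple cycles of length at least two, keeping the interpretation clean.

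The main obstacle I anticipate is not the algebra itself but the careful reconciliation of the three nested summations: matching the subset choice, the cycle-length partition, and the per-cycle product to the classical principal-minor/cycle-cover expansion, while simultaneously tracking the edge multiplicities so that each $[a]$ genuinely counts arrow-level walks rather than mere vertex cycles. Verifying the sign identity $(-1)^k\,\mathrm{sgn}(\sigma)=(-1)^m$ in full generality, and confirming that disjointness of the cycles in a cover corresponds exactly to the factorised product structure $\prod_a(-[a])$, is where the argument must be most careful.
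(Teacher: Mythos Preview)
Your proposal is correct and follows essentially the same route as the paper: both start from $\zeta_G^{-1}(z)=\det(I-Az)$, expand via the Leibniz permutation formula, decompose each surviving permutation into disjoint directed cycles, and track the sign to obtain one factor of $-1$ per cycle. Your packaging is slightly more streamlined---you invoke the standard identity that the $z^k$-coefficient is $(-1)^k$ times the sum of $k\times k$ principal minors and the cycle-type sign formula $\mathrm{sgn}(\sigma)=(-1)^{k-m}$ directly, whereas the paper derives both of these by hand (including a conjugation argument $\epsilon(\tau\sigma\tau^{-1})=\epsilon(\sigma)$ for the sign)---but the combinatorial content is identical.
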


\begin{proof}
As we saw in the previous sections, 
the reciprocal of the Ihara zeta function for a fully directed graph $G$
with no self-adjoint loops is simply
\begin{eqnarray}
  \zeta_G^{-1}(z) = {\rm Det}(I_n-Az) ,
  \label{eq:reciprocal-zeta}
\end{eqnarray}
where $A$ is the adjacency matrix with entries denoted by $a_{ij}$.
Let us define $Z$ as the matrix in the above determinant expression,
\begin{eqnarray}
  Z := I_n - Az = \left( \begin{array}{cccc}
    1 & -a_{12}z & \cdots & -a_{1n}z \\
    -a_{21}z & 1 & \cdots & -a_{2n}z \\
    \vdots & \vdots & \ddots & \vdots \\
    -a_{n1}z & -a_{n2}z & \cdots & 1
  \end{array} \right) .
  \label{eq:matrix-Z}
\end{eqnarray}
Now, Eq.~(\ref{eq:reciprocal-zeta}) can be expanded as
\begin{eqnarray}
  \zeta^{-1} = {\rm Det} Z =
  \sum_{i_1\cdots i_n} \varepsilon_{i_1\cdots i_n} Z_{1i_1} \cdots Z_{ni_n} ,
  \label{eq:zeta-expansion}
\end{eqnarray}
where we have omitted the subscript $G$ and argument $z$ on the left for convenience.
Obviously the right hand side is a finite polynomial of $z$ in which no term
has its power higher than $n$. All terms have several 1's and several
$-a_{ij}z$ terms, therefore a general term with $(n-k)$ 1's would look like
\begin{eqnarray}
  \varepsilon_{i_1\cdots i_n} (-a_{j_1i_{j_1}}z) \cdots (-a_{j_ki_{j_k}}z)
  = (-1)^k \varepsilon_{i_1\cdots i_n} a_{j_1i_{j_1}} \cdots a_{j_ki_{j_k}} z^k .
  \label{eq:general-term}
\end{eqnarray}
Written in this way, we should note that we do not sum over all $i$-indices
any more. The reason is that once we have chosen some combination of
$\{j_1,\cdots,j_k\}$, say $\{1,\cdots,k\}$, then the 1's automatically take
other indices, i.e. $k+1$ to $n$. Since all 1's are coming from the diagonal
entries, which means $Z_{k+1,k+1}=\cdots=Z_{nn}=1$ or $i_{m}=m$ for $m>k$.
For this specific combination, this term simplifies to
\begin{eqnarray}
  (-1)^k \varepsilon_{i_1\cdots i_k(k+1)\cdots n} a_{1i_1} \cdots a_{ki_k} z^k
  = (-1)^k \varepsilon_{i_1\cdots i_k} a_{1i_1} \cdots a_{ki_k} z^k ,
  \label{eq:specific-term}
\end{eqnarray}
which indicates only $i_1$ to $i_k$ are summed indices. We can then rewrite
(\ref{eq:zeta-expansion}) as
\begin{eqnarray}
  \zeta^{-1} &=& \sum_{k=0}^n c_k z^k ,\nonumber\\
  c_k &=& (-1)^k \sum_{\{j_1,\cdots,j_k\}} \sum_{\{i_{j_1},\cdots,i_{j_k}\}}
  \varepsilon_{\cdots i_{j_1}\cdots i_{j_k}\cdots}\ a_{j_1i_{j_1}} \cdots a_{j_ki_{j_k}} ,\nonumber\\
  &=& (-1)^k \sum_{\{j_1,\cdots,j_k\}} \sum_{\{i_1,\cdots,i_k\}}
  \varepsilon_{i_1\cdots i_k} a_{j_1i_1} \cdots a_{j_ki_k} ,
  \label{eq:det-sum}
\end{eqnarray}
where the first summation in $c_k$ sums over all possible unordered combinations of
$\{j_1,\cdots,j_k\}$ and the second sums over all permutations of $\{j_1,\cdots,j_k\}$,
and it is understood that $\varepsilon_{i_1\cdots i_k}=1$ if $i_1<\cdots<i_k$. One can
easily see that $c_0=1$ from the above expression. Next we want to find out the
graphical meaning of other coefficients.

For clarity, we define some terms to make later analysis easier.
\begin{itemize}
  \item {\sf Simple Loop}: we call a monomial a simple loop if the indices of its
    components can be written in a cyclic pattern: $a_{ij}a_{jl}\cdots a_{pq}a_{qi}$
    and each index appears exactly twice, once as a row index and once as a column
    index. We say a monomial with its indices written in such cyclic form is in
    its {\it standard form}. We use square brackets to denote the standard form:
    \[ [a]_{ij\cdots li} := a_{ij} \cdots a_{li} , \]
    where the subscripts keep track of the vertex sequence in this loop.
  \item {\sf Equivalent Loops}: we say two loops are equivalent if their index sequences are different rotations of the same sequence given that both terms are written in standard forms. E.g., $a_{12}a_{23}a_{31}=a_{23}a_{31}a_{12}$. Obviously equivalent loops have equal values.
  \item {\sf Disjoint Loops}: we say two loops are disjoint if their corresponding
    monomials use different (disjoint) sets of indices, e.g., $a_{12}a_{23}a_{31}$
    and $a_{45}a_{54}$. In terms of graph theory, they do not share same vertices.
\end{itemize}
Now we claim that every term in the summation of (\ref{eq:det-sum}) can be split into
several (or 1) disjoint loops. This is due to the fact that for any term in the expansion
of a determinant each index should appear exactly twice. Then we can start from an
arbitrary $a_{ij}$ and select $a_{jk}$ to be the next entry. By repeating this procedure
we would complete searching for all simple loops in this term with no entries left. With such
statement, we can rewrite (\ref{eq:det-sum}) as summation over products of disjoint simple loops,
\begin{eqnarray}
  c_k = (-1)^k \sum_{\{j_1,\cdots,j_k\}} \sum
  \varepsilon_{i_1\cdots i_k} ([a]_{i_p\cdots i_p}) \cdots ([a]_{i_q\cdots i_q}) ,
  \label{eq:sum-over-loops}
\end{eqnarray}
where the second summation sums over all possible inequivalent simple loop combinations
whose lengths add up to $k$ and whose indices/vertices $i_l$ are chosen only from the set
$\{j_1,\cdots,j_k\}$.

Now let us focus on the second summation to see if we can extract more graphical
information therein. First of all we shall assume without loss of generality that the $k$
vertices have already been chosen to be $\{1,2,\cdots,k\}$. Second we would rewrite
the Levi-Civita symbol $\varepsilon$ as products of shorter $\varepsilon$'s each of which is associated with one simple loop. This can be most easily seen by a concrete example,
\begin{eqnarray}
  \varepsilon_{34521} a_{13} a_{24} a_{35} a_{42} a_{51}
  = \varepsilon_{351} \varepsilon_{42} (a_{13}a_{35}a_{51}) (a_{24}a_{42})
  = (\varepsilon_{351} [a]_{1351}) (\varepsilon_{42} [a]_{242}) .
  \label{eq:epsilon-split-example}
\end{eqnarray}
Generally, we have
\begin{eqnarray}
  \varepsilon_{i\cdots l} a_{1i} \cdots a_{kl} = (\varepsilon_{i\cdots} [a]_{1i\cdots 1})
  \cdots (\varepsilon_{j\cdots} [a]_{\cdots j\cdots}) .
  \label{eq:epsilon-split}
\end{eqnarray}
Then we can rewrite the second summation in (\ref{eq:sum-over-loops}) as
\begin{eqnarray}
  \sum_{\sum_i l_i=k} \prod_{i} \Big( (-1)^{l_i}
  \varepsilon_{j_i\cdots p_i} [a]_{\cdots j_i\cdots p_i\cdots} \Big) ,
  \label{eq:sum-over-loop-prod}
\end{eqnarray}
where $l_i$ is the length of the $i$-th simple loop in one combination of loops whose
lengths add up to $k$.

We want to prove that in the above product: (I) $(-1)^{l_i}\varepsilon_{j_i\cdots}=-1$,
(II) $[a]_{\cdots j_i\cdots}$ is the number of ways travelling through these vertices in
this simple loop. The second proposition can be readily understood by the fact that
$a_{ij}$ is the number of arrows from $i$ to $j$ which means there are $a_{ij}$ ways
to walk from $i$ to $j$, so the number of ways going through all vertices are their
product. The proof of the first proposition is slightly tricky and we need to take
advantage of symmetric groups.

Suppose we have $l$ vertices labeled with $\{j_1,j_2,\cdots,j_l\}$ where $j_1<j_2<\cdots<j_l$.
If there is a loop going through $j_1, j_2, \cdots, j_l$ successively, then there
will be a term in the product,
\begin{eqnarray}
  (-1)^l \varepsilon_{j_2j_3\cdots j_lj_1} a_{j_1j_2} a_{j_2j_3} \cdots a_{j_lj_1}
  = (-1)^l \varepsilon_{23\cdots l1} a_{j_1j_2} a_{j_2j_3} \cdots a_{j_lj_1}
  = (-1)^l \epsilon(\sigma) a_{j_1j_2} a_{j_2j_3} \cdots a_{j_lj_1} ,
  \label{eq:base-loop}
\end{eqnarray}
where $\sigma$ is an element of the symmetric group $S_l$,
\begin{eqnarray}
  \sigma = \left( \begin{array}{ccccc}
    1 & 2 & \cdots & l-1 & l \\
    2 & 3 & \cdots & l & 1
  \end{array} \right) .
  \label{eq:sigma}
\end{eqnarray}
On the right hand side of this equation we have used the signature (parity) function $\epsilon$ which
takes value 1 if the group element is an even permutation and takes value $-1$ otherwise.
Obviously $\epsilon(\sigma)=(-1)^{l-1}$ which gives an overall factor $-1$ in (\ref{eq:base-loop}).
For a generic loop going through $j_1$ to $j_l$ but not necessarily in an ascending order,
what is the overall factor? We have mentioned that equivalent loops differ only on which
vertex is the starting vertex. So we can safely set all inequivalent loops to start at
$j_1$, then any simple loop can be written as
\begin{eqnarray}
  (-1)^l (\varepsilon_{j_{i_2}\cdots}) a_{j_1j_{i_2}} a_{j_{i_2}j_{i_3}} \cdots a_{j_{i_l}j_1} .
  \label{eq:generic-loop}
\end{eqnarray}
Now the problem is how to determine other subscripts of $\varepsilon$. To determine the
$p$-th subscript we need to know which of these $a_{ij}$ has its first subscript being
$j_p$ and then the second subscript would be the $p$-th subscript of $\varepsilon$.
We first define another element of the same symmetric group,
\begin{eqnarray}
  \tau =  \left( \begin{array}{cccc}
    1 & 2 & \cdots & l \\
    1 & i_2 & \cdots & i_l
  \end{array} \right) ,
  \label{eq:tau}
\end{eqnarray}
and its inverse element
\begin{eqnarray}
  \tau^{-1} =  \left( \begin{array}{cccc}
    1 & i_2 & \cdots & i_l \\
    1 & 2 & \cdots & l
  \end{array} \right) .
  \label{eq:tau-inverse}
\end{eqnarray}
Through this permutation operator we can find out which $a_{ij}$ has its first subscript
equal to $j_p$, that is $j_p=j_{\tau(\tau^{-1}(p))}=j_{i_{\tau^{-1}(p)}}$. So the second
script would be $j_{i_{\tau^{-1}(p)+1}}=j_{i_{\sigma\tau^{-1}(p)}}=j_{\tau\sigma\tau^{-1}(p)}$.
Therefore the $\varepsilon$ in (\ref{eq:generic-loop}) can be explicitly calculated
\begin{eqnarray}
  \varepsilon_{j_{\tau\sigma\tau^{-1}(1)} \cdots j_{\tau\sigma\tau^{-1}(l)}}
  = \varepsilon_{\tau\sigma\tau^{-1}(1) \cdots \tau\sigma\tau^{-1}(l)}
  = \epsilon(\tau\sigma\tau^{-1}) = \epsilon(\sigma) = (-1)^{l-1} .
  \label{eq:explicit-epsilon}
\end{eqnarray}
Thus we proved our first proposition that $(-1)^l\varepsilon_{j\cdots}=-1$. Then
(\ref{eq:sum-over-loops}) simplifies to
$
  c_k = \sum_{\{j_1,\cdots,j_k\}} \sum_{\sum_i l_i=k} \prod_{a} (-[a])
$
and we come to conclusion in the Proposition \ref{thm:final-ck}.
\end{proof}

\subsection{A Generic Formula for Superpotential}
From the aforementioned, we observed that the inverse of Ihara zeta function of fully directed quivers without self-adjoining loops can be explained by the number of disjoint simple cycles in the graph. In this setting, the inverse of Ihara zeta function is therefore a finite polynomial. More importantly, the inverse only looks at simple cycles in the quiver, which are simply terms in a generic superpotential. Therefore the Ihara zeta function, which is an infinite expansion of a finite polynomial, itself simply collects all possible combination of simple cycles of various lengths. 

While (\ref{eq:final-ck}) is the general conclusion, we would like to use this formula to calculate the reciprocal of Ihara zeta functions in a special case, where any two inequivalent loops share at least one point, so in (\ref{eq:final-ck}) there will be no
product since there is no disjoint loops according to this special setup. Therefore here we would have
\begin{eqnarray}
  c_k = - \sum_{{\rm length\ }k{\rm\ cycles}} [a] .
  \label{eq:special-ck}
\end{eqnarray}
That is, $c_k$ is the negative of the number of all inequivalent simple cycles of length $k$ in the corresponding graph. 
Note that traversing the same vertex sequence but through different arrows is considered as traversing different cycles.

In this perspective, the Ihara zeta function gives the most generic superpotential when given information on the quiver alone (of course, in obtaining Calabi-Yau moduli spaces for certain theories one needs specific couplings). 
We now introduce a {\bf refinement} of the Ihara zeta function, \emph{i.e}, coefficients of the monomials in expansion will be more than just one complex variable $z$ but one for each arrow in the quiver.
To give some illustrative examples, let us consider the $dP_0$ and $F_0$ (cones over $\mathbb{P}^2$ and $\mathbb{P}_1 \times \mathbb{P}_1$) cases separately. 

For $dP_0$, we have 9 fields $\Psi_{1, \ldots ,9}$ and we shall replace the entries in the adjacency matrix with these fields:
\begin{equation}
Q_{\mathbb{P}^2}' = \left(
\begin{array}{ccc}
0 & \Psi_1 + \Psi_2 + \Psi_3 & 0 \\
0 & 0 & \Psi_4 + \Psi_5 + \Psi_6\\
\Psi_7 + \Psi_8 + \Psi_9 & 0 & 0
\end{array}
\right) \ . 
\end{equation}
So we have the inverse of Ihara zeta function as 
\begin{equation}
\begin{aligned}
\zeta_{Q_{\mathbb{P}^2}'}^{-1} & = 1 - (\Psi_1 + \Psi_2 + \Psi_3)(\Psi_4 + \Psi_5 + \Psi_6)(\Psi_7 + \Psi_8 + \Psi_9)z^3 \\
 & = 1 - (\Psi_1\Psi_4\Psi_7 + \Psi_2\Psi_4\Psi_7 + \Psi_3\Psi_4\Psi_7 + \Psi_1\Psi_5\Psi_7 + \Psi_2\Psi_5\Psi_7 + \Psi_3\Psi_5\Psi_7\\
  & + \Psi_1\Psi_6\Psi_7 + \Psi_2\Psi_6\Psi_7 + \Psi_3\Psi_6\Psi_7 + \Psi_1\Psi_4\Psi_8 + \Psi_2\Psi_4\Psi_8 + \Psi_3\Psi_4\Psi_8\\
  & + \Psi_1\Psi_5\Psi_8 + \Psi_2\Psi_5\Psi_8 + \Psi_3\Psi_5\Psi_8 + \Psi_1\Psi_6\Psi_8 + \Psi_2\Psi_6\Psi_8 + \Psi_3\Psi_6\Psi_8\\ 
  & + \Psi_1\Psi_4\Psi_9 + \Psi_2\Psi_4\Psi_9 + \Psi_3\Psi_4\Psi_9 + \Psi_1\Psi_5\Psi_9 + \Psi_2\Psi_5\Psi_9 + \Psi_3\Psi_5\Psi_9\\ 
  & + \Psi_1\Psi_6\Psi_9 + \Psi_2\Psi_6\Psi_9 + \Psi_3\Psi_6\Psi_9)z^3.
\end{aligned}
\end{equation}
Here we have all possible cubic superpotential terms that can be constructed from the quiver, however the true full superpotential for $dP_0$ theory is (e.g., see the catalogue in \cite{Davey:2009})
\begin{equation}
\mathcal{W}_{\mathbb{P}^2} = \Psi_1\Psi_4\Psi_7 - \Psi_1\Psi_6\Psi_8 - \Psi_2\Psi_4\Psi_9 +
\Psi_2\Psi_5\Psi_8 - \Psi_3\Psi_5\Psi_7 + \Psi_3\Psi_6\Psi_9.
\end{equation}

Therefore, the expansion of Ihara zeta function in terms of bi-fundamental fields will give the most {\em generic} superpotential of a specific quiver gauge theory, where the constant 1 in the expansion should be removed as well as changing the signs of non-constant in $\zeta^{-1}$.
Moreover, we can also interpret $z^3$ as the coupling for all the terms in superpotential. The moduli space here would be trivially a point since the F-terms have as many non-trivial constraints as there are number of fields. In other words, the Jacobian ideal of the generic superpotential is trivial.

Next, let us consider a Seiberg dual phase for the $F_0$ theory. We have its adjacency matrix in the following form:
\begin{equation}
Q_{\mathbb{P}^1 \times \mathbb{P}^1}' = 
\left(\begin{array}{cccc}
0 & 0 & 0 & \sum_{n=1}^6 \Psi_n \\
\sum_{n=7}^8 \Psi_n & 0 & 0 & 0 \\
0 & \sum_{n=9}^{10} \Psi_n & 0 & 0 \\
0 & \sum_{n=11}^{18} \Psi_n & \sum_{n=19}^{24} \Psi_n & 0
\end{array} \right) \ .
\end{equation}
With this adjacency matrix, we have its Ihara zeta function to be
\begin{equation}
\zeta_{Q_{\mathbb{P}^1 \times \mathbb{P}^1}'}^{-1} = 
1 - \sum_{n=1}^6 \Psi_n \sum_{n=7}^8 \Psi_n (z \sum_{n=11}^{18} \Psi_n + z^2 \sum_{n=9}^{10} \Psi_n \sum_{n=19}^{24} \Psi_n) z^2,
\end{equation}
where it is obvious that all possible combinations of cubic and quartic terms are generated, and upon inverting, we have an infinite polynomial generated from the above finite one.

\subsection{Adjoint Fields and Bi-Directional Arrows}
\label{sec:SPP}

While most of our $\mathcal{N}=1$ gauge theories consist only of directed arrows, there are some that have self-adjoint loops (adjoint fields) as well as bi-directional arrows, such as the SPP theory and the dual phases of $L^{aba}$ theories. For these, the form of the Ihara zeta function are more complicated since we can replace each pair of directed arrows between nodes by a single undirected edge
and it would be interesting to see the difference between treating these edges as directed
and treating them as undirected. In the following, we will take SPP case as an example to
illustrate this difference.

The SPP quiver \cite{Morrison:1998, Franco:2005} is shown in Figure \ref{fig:SPP} from
which we can easily read off its adjacency matrix,
\begin{eqnarray}
  A = \left( \begin{array}{ccc}
    2 & 1 & 1 \\
    1 & 0 & 1 \\
    1 & 1 & 0
  \end{array} \right) .
  \label{eq:SPP-adj}
\end{eqnarray}
\begin{figure}[!ht]
\centering
	\begin{tikzpicture}[->,>=stealth',shorten >= 1pt,auto,semithick,
	main node/.style={circle,fill=green},
	inverse/.style={<-,shorten <= 1pt, auto, semithick}]
		\pgfmathsetmacro{\y}{4*sin(60)}
		\pgfmathsetmacro{\c}{4*sin(60)+.5}
		\pgfmathsetmacro{\r}{.5}
        \draw [->] (-.707*\r,\c-.707*\r) to [out=135,in=225] (-.707*\r,\c+.707*\r)
              to [out=45,in=135] (.707*\r,\c+.707*\r) to [out=315,in=45] (.707*\r,\c-.707*\r);
		\node[main node]		(1)		at (0,\y)		{1};
		\node[main node]		(2)  	at (-2,0)		{2};
		\node[main node]		(3)  	at (2,0)		{3};
		\path 	(2)		edge		[<->]		node		{}		(1)
				(3)		edge		[<->]		node		{}		(2)
				(1)		edge		[<->]		node		{}		(3);
	\end{tikzpicture}
    \caption{\sf SPP Quiver. All arrows are bi-directional and graphically they can be replaced by undirected edges.}
	\label{fig:SPP}
\end{figure}
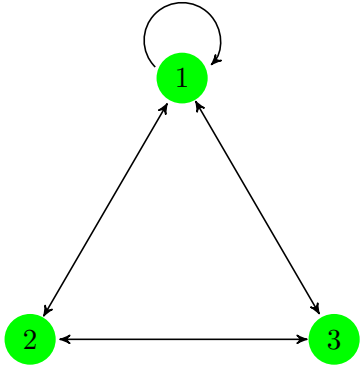
There are several different ways to view its edges:
\begin{itemize}
  \item[1] All edges are considered to be directed, thus a fully directed graph. We then can still use
    \eqref{eq:fully-directed-zeta} to compute Ihara zeta function.
  \item[2] All edges are considered to be undirected, then it is an undirected graph. We then need to
    utilize an undirected version of formula \eqref{eq:partial_zeta},
    \begin{eqnarray}
      \zeta(z) &=& \frac{(1-z^{2})^{-\rm{Tr}(Q-I)/2}}{{\rm Det}(I-Az+Qz^{2})},
      \label{eq:undirected-zeta}
    \end{eqnarray}
    to compute the zeta function.
  \item[3] Only the self-adjoint loop is considered as undirected, thence we get a partially directed graph.
    In this case we have to use Eq. \eqref{eq:partial_zeta} itself to compute Ihara zeta function.
\end{itemize}

In the first perspective, the Ihara zeta function is extremely simple,
\begin{eqnarray}
  \zeta^{-1}(z) = (1+z)(1-3z) .
  \label{eq:SPP-fully-directed}
\end{eqnarray}
It can be readily seen that there are only two poles $z=-1$ and $z=1/3$ both being real.
For the above second case, we have $Q=\textrm{diag}\{3,1,1\}$ and the zeta function
follows immediately,
\begin{eqnarray}
  \zeta^{-1}(z) &=& (1-z)^2(1+z)(1+z+z^2)(1-2z+2z^2-3z^3) .
  \label{eq:SPP-undirected}
\end{eqnarray}
The poles of this function are plotted in Figure \ref{fig:SPP-poles} as green and blue points.
\begin{figure}[!t]
  \centering
  \includegraphics[width=.5\textwidth]{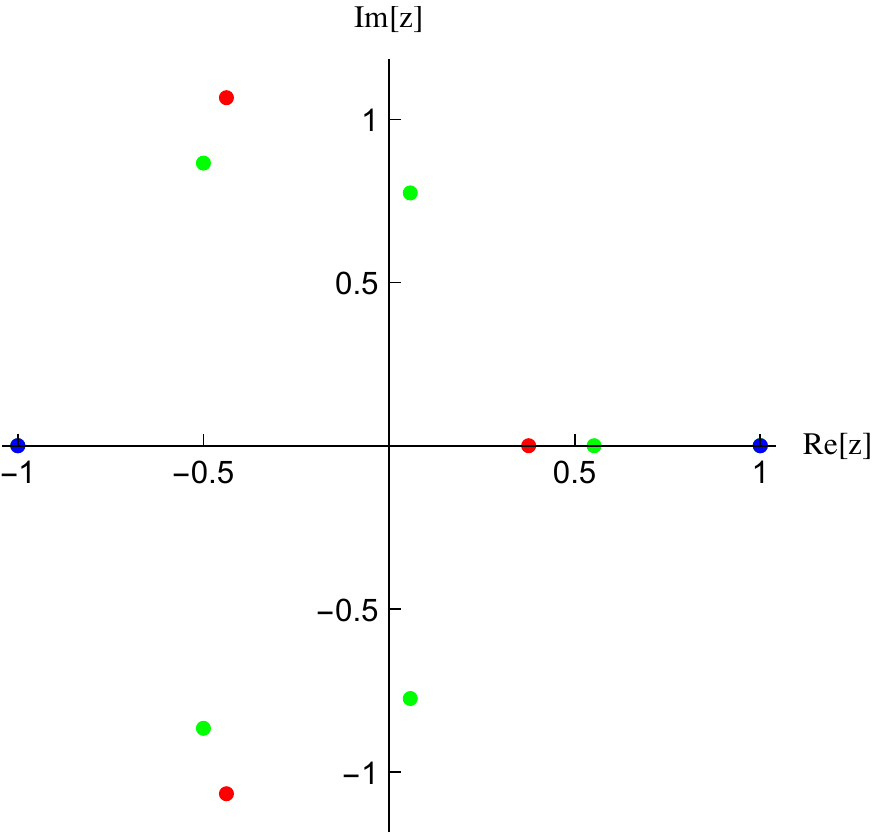}
  \caption{\sf Poles of Ihara zeta function for the SPP quiver. The green and blue points are poles
  when we see all the edges as undirected ones, whereas the red and blue points are poles when
  we take SPP quiver as a partially directed graph.}
  \label{fig:SPP-poles}
\end{figure}
In the last case, $Q$ only
counts undirected degrees and $P$ is only responsible for directed edges, so we find
$Q=\textrm{diag}\{1,-1,-1\}$ and
\begin{eqnarray}
  P = \left( \begin{array}{ccc}
    0 & 1 & 1 \\
    1 & 0 & 1 \\
    1 & 1 & 0
  \end{array} \right) .
  \label{eq:SPP-P-matrix}
\end{eqnarray}
Then the resulting zeta function reads
\begin{eqnarray}
  \zeta^{-1}(z) = (1-z)(1+z)(1-2z-z^2-2z^3) .
  \label{eq:SPP-partial}
\end{eqnarray}
The poles in this case are also plotted in Figure \ref{fig:SPP-poles} as red and blue points.

From the above example, one can see that the behaviors of poles for Ihara zeta function
are quite different when we treat a graph with self-adjoint loops and bi-directional
arrows from different perspectives.

Next we turn to the refined Ihara zeta function in which the adjacency matrix $A$ takes
fields as its entries other than just the number of arrows/edges/loops.
For the first case viewing all edges as arrows, we have
\[ A = \left( \begin{array}{ccc}
    2\phi & \Psi_1 & \Psi_2 \\
    \Psi_3 & 0 & \Psi_4 \\
    \Psi_5 & \Psi_6 & 0
\end{array} \right), \]
which results in the Ihara zeta function
\begin{eqnarray}
  \zeta^{-1}_{SPP}(z) = 1 - 2 \phi z - (\Psi_1\Psi_3 + \Psi_2\Psi_5 + \Psi_4\Psi_6) z^2
 - (\Psi_1\Psi_4\Psi_5 + \Psi_2\Psi_3\Psi_6 - 2 \phi\Psi_4\Psi_6) z^3 .
 \label{eq:zeta-SPP}
\end{eqnarray}
Therefore we still have the refined Ihara zeta function as the generating function
of generic superpotentials. However, for the second and third cases, we have to face the
problem of how to define the $Q$ and $P$ matrices. Since $Q$ is in the power, we would
assume it remains to be a matrix of numbers, whereas $P$ can be assumed to comprise of either numbers
or fields. Nonetheless, in either assumption the final Ihara zeta function cannot be
thought of as a superpotential generating function.

\section{Conclusions and Prospects}\label{s:conc}
In this paper, we have investigated the Ihara zeta function associated to quiver gauge theories, especially under the action of Seiberg duality.
Aid by a multitude of explicit examples, we have learnt many lessons.

In the $\mathbb{P}^2$ case, the Seiberg dual theories are well-known to be in one-to-one correspondence to solutions of the Markov Diophantine equation and
we proved that all poles of the Ihara zeta functions are on the lines of cubic roots of unity.
For $\mathbb{P}^1 \times \mathbb{P}^1$ theories, there is also an underlying Diophantine equation of a so-called 3-block structure \cite{Benvenuti:2004dw,Hanany:2012mb}. Due to the complexity here, we adopted a numerical approach and found that every (even) solution with its components being even and less than 800 corresponds to some quiver theory that is a Seiberg dual phase.

Inspired by the graph version of the Riemann Hypothesis, We studied the distribution of poles of the Ihara zeta function along the Seiberg duality tree of the various theories, including the various del Pezzo and Hirzebruch geometries.
The results are shown in figure \ref{fig:pole_plots} where we can see that in all cases, the poles are {\em concentrated in strip areas} on or near the coordinate axes, which shows an axial symmetry about the horizontal axis.
While there is yet to be a formulation for the Riemann Hypothesis for irregular digraphs as in our quivers, it is still reasonable to speculate that all poles should be constrained within some area other than randomly distributed everywhere in the complex plane.

Nevertheless, we can still use the definitions of strong and weak versions of the graph Riemann Hypothesis (cf.~Eqs.\eqref{def:strong_RH} and \eqref{def:weak_RH}) with certain choice of (max) degree $q$ to see how it acts on Seiberg duals. Numerical results are summarized in \S\ref{s:numerical_riemann}. It is interesting to observe how this RH setup selects certain quivers in the duality tree and hence produce specific prime sequences in the coefficients of the zeta function.

As a interesting by-product, we find a graph-theoretic interpretation for the reciprocal of Ihara zeta function which is summarized in \eqref{eq:final-ck} and which has gauge theoretic repercussions. As a consequence of this formula, it is immediate to conclude that a {\em refined version} of the Ihara zeta function is a generating function for the generic superpotential (with couplings in powers of $z$) for the quiver gauge theory.

One of the immediate challenges is to have a reasonable generalization of the graph Riemann Hypothesis for directed quivers. While there is a lack of functional equation in such a case, we have seen in our cases that the distribution of the poles is not random and that under Seiberg duality, they remain in constrained regions.

In section \ref{sec:SPP}, we have seen that the refined Ihara zeta function cannot
be a generating function of the SPP quiver. So another interesting work is to find
a closed form (may not be a Ihara zeta type function) for generic superpotentials
of undirected quivers or partially directed quivers comprised of directed edges and
undirected self-loops. After all, these two kinds of quivers both have significant
meanings in quiver gauge theories.

\section*{Acknowledgements}
We thank USTC Cloud Lab for providing us with a 16-core virtual machine for some intensive computations of this paper. 
YHH is indebted to the Science and Technology Facilities Council, UK, for grant ST/J00037X/1, the Chinese Ministry of Education, for a Chang-Jiang Chair Professorship at NanKai University, and the city of Tian-Jin for a Qian-Ren Award.
YHH is also indebted to Merton College, Oxford for support.
DZ acknowledges the support from the National Science Foundation of China under grant No. 11105138 and 11235010. He is also supported by the Chinese Scholarship Council (CSC).
YX acknowledges the support from the Doctoral Scholarship of City University London.

\appendix
\appendixpage
\section{Ihara Zeta Function: Determinants and Euler-Product}
\label{ap:euler_determinant}
In \cite{Bass}, another zeta function called \textbf{edge zeta function} is constructed to give a determinant form of Ihara zeta function for undirected graph. Before giving the full definition of this zeta function, we need a few other definitions:
\begin{itemize}
	\item The \textbf{edge matrix} $W$ of size $2m \times 2m$ for an undirected graph with $m$ undirected edges has entries $w_{ij}$. The $(i,j)$-th entry of $W$, $w_{ij}$, is a complex variable if the edge $e_i$ is connected with edge $e_j$ with $e_j \neq e_i^{-1}$, and the entry is 0 if otherwise.
	\item For a closed path $C$ in an undirected graph $X$ written as a sequence of edges $C = e_1e_2 \cdots e_s$, the \textbf{edge norm} of $C$ is
	\[N_E(C) = w_{12}w_{23} \cdots w_{s1}.\]
\end{itemize}
With the above definitions in hand, we can define the edge zeta function as follows
\[\zeta_E(W,X) = \prod_{[P] \in \rm{Prime \ Cycles}} (1-N_E(C))^{-1}. \]
It is clear from this definition that if $w_{ij}$ is set to $z \in \mathbb{C}$, we recover the original Ihara zeta function such that
\[\zeta_G(z) = \zeta_E(W_1,G),\]
where $W_1$ is the edge matrix when all non-zero entries set to $z$.

Furthermore, we have the following theorem (cf. Chapter 3 of \cite{Terras})
\begin{theorem}
\[\zeta_E(W,G) = {\rm Det}(I - W)^{-1}.\]
\label{thm:edge_zeta}
\end{theorem}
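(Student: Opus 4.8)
The plan is to prove the identity by taking the logarithm of both sides and matching the two resulting power-series expansions in the edge variables $w_{ij}$, working throughout as an identity of formal power series (equivalently, for $w_{ij}$ small enough that all series converge absolutely). The right-hand side is the easier one: using ${\rm Det}(I-W)^{-1} = \exp\left(-{\rm Tr}\log(I-W)\right)$ together with $-\log(I-W) = \sum_{s\geq 1} W^s/s$, I would write
\[
\log {\rm Det}(I-W)^{-1} = \sum_{s\geq 1}\frac{{\rm Tr}(W^s)}{s}.
\]

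Next I would give the combinatorial meaning of ${\rm Tr}(W^s)$. Expanding the matrix product,
\[
{\rm Tr}(W^s) = \sum_{i_1,\dots,i_s} w_{i_1 i_2}w_{i_2 i_3}\cdots w_{i_s i_1},
\]
so each nonzero term is the edge norm $N_E(C)$ of a closed path $C = e_{i_1}e_{i_2}\cdots e_{i_s}$ of length $s$. The definition of $W$ (an entry $w_{ij}$ is nonzero only when $e_i$ feeds into $e_j$ and $e_j \neq e_i^{-1}$) guarantees that every consecutive pair of edges, including the wrap-around factor $w_{i_s i_1}$, is non-backtracking; the wrap-around condition $e_{i_1}\neq e_{i_s}^{-1}$ is precisely the tailless condition. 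Hence ${\rm Tr}(W^s)$ is the sum of $N_E(C)$ over all closed, backtrackless, tailless paths $C$ of length $s$ carrying a distinguished starting edge, i.e. not yet identified under cyclic rotation.

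Then I would regroup this sum according to primitivity. Every such closed path $C$ is uniquely a power $C = P^k$ of a primitive closed path, with $N_E(C) = N_E(P)^k$. For a fixed prime equivalence class $[P]$ of length $\ell$, its cyclic rotations give $\ell$ distinct starting edges, and for each multiplicity $k$ these contribute to ${\rm Tr}(W^{k\ell})$ a total of $\ell\, N_E(P)^k$. Summing the contribution of $[P]$ over all $s = k\ell$ therefore yields
\[
\sum_{k\geq 1}\frac{\ell\, N_E(P)^k}{k\ell} = \sum_{k\geq 1}\frac{N_E(P)^k}{k} = -\log\bigl(1 - N_E(P)\bigr).
\]
Summing over all prime classes $[P]$ reproduces exactly $\log\zeta_E(W,G) = -\sum_{[P]}\log\bigl(1-N_E(P)\bigr)$, matching the right-hand side and completing the proof after exponentiation.

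The main obstacle is the bookkeeping in this regrouping step: one must verify that ${\rm Tr}(W^s)$ counts closed paths with a \emph{marked} starting edge, so that a prime of length $\ell$ contributes $\ell$ rather than $1$, and then check that this factor of $\ell$ cancels against the weight $1/s = 1/(k\ell)$ to leave the clean $1/k$ needed to rebuild $-\log(1-N_E(P))$. A secondary point worth checking carefully is that the single off-diagonal constraint $e_j \neq e_i^{-1}$ encoded in $W$, combined with the cyclic closure of the trace, genuinely enforces the backtrackless-and-tailless condition around the \emph{entire} loop, including the junctions between successive copies in $P^k$ as well as the final wrap-around from $e_{i_s}$ back to $e_{i_1}$.
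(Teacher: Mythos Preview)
Your proposal is correct and follows essentially the same route as the paper: both arguments take logarithms, use $\log{\rm Det}(I-W)^{-1}=\sum_{s\geq 1}{\rm Tr}(W^s)/s$, interpret ${\rm Tr}(W^s)$ as a sum of edge norms over closed backtrackless tailless paths with a marked starting edge, and then regroup by primitive cycles so that the factor $\ell([P])$ cancels the $1/s$ weight to recover $-\log(1-N_E(P))$. The only difference is the direction of travel (you start from the determinant side, the paper from the Euler-product side), and you are somewhat more explicit about why the constraint $e_j\neq e_i^{-1}$ in $W$ enforces the no-backtrack and tailless conditions around the full loop.
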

\begin{proof}
By taking logarithm of Euler-product, we have
\[-{\rm log}\zeta_E(W,G) = \sum_{[P]} \sum_{j \geq 1} \frac{1}{j} N_E(P)^j,\]
where Taylor expansion of logarithm is used. As we have $l(P)$ elements in the prime equivalent class $[P]$, hence
\[-{\rm log}\zeta_E(W,G) = \sum_{\substack{m \geq 1 \\ j \geq 1}}\frac{1}{jm} \sum_{\substack{P \\ l[P] = m}} N_E(P)^j.\]
The sum now is over all prime paths. This is equivalent to sum over all paths of the form $C = P^j$ with length $jm$. It then follows that
\[-{\rm log}\zeta_E(W,G) = \sum_C \frac{1}{l[C]} N_E(C).\]
Since we also gave the following
\[ \sum_C \frac{1}{l[C]} N_E(C) = \sum_{m \geq 1} \frac{1}{m} {\rm Tr}(W^m) = -{\rm Trlog}(I-W) = -{\rm logDet}(I-W)^{-1},\]
where the first equality can be understood as the trace over $W^m$ is collecting all cycles of various length, then we have $\zeta_E(W,G) = {\rm Det}(I-W)^{-1}.$
\end{proof}
Before we give proof to the determinant formula for Ihara zeta function, some matrix identities are needed. Proofs of these identities are given in Chapter 3 of \cite{Terras}.
The following definitions are needed in the proof:

First set $J = \left( \begin{array}{cc}
0 & I_m\\
I_m & 0
\end{array}
\right)$. Then the $n \times 2m$ start matrix $S$ and the $n \times 2m$ terminal matrix $T$ are defined as
\[ S_{ve} = 
\begin{cases}
1, & \text{if $v$ is the starting vertex of edge $e$},\\
0, & \text{otherwise,}
\end{cases}\]

and
\[ T_{ve} = 
\begin{cases}
1, & \text{if $v$ is the terminal vertex of edge $e$},\\
0, & \text{otherwise.}
\end{cases}\]
With the above definitions, we have the following identities
\begin{enumerate}
	\item $SJ = T$ and $TJ = S$.
	\item If $A$ is the adjacency matrix of graph $G$, and $Q+I_n$ is the undirected degree matrix whose $j$-th diagonal entry specifies the degree of $j$-th vertex in $G$. Then we have $A=ST^T$ and $Q+I_n = SS^T = TT^T$.
	\item The edge matrix $W_1$ obtained by setting all non-zero entries of $W$ to $z$ has the identity $W_1 + J = T^TS$. Here $M^T$ is the usual notation for matrix transpose.
\end{enumerate}

Now we come to the proof of determinant formula of Ihara zeta function for undirected graphs, \emph{i.e.}, $\zeta_G(z) = (1-z^2)^{-{\rm Tr}(Q-I)/2}{\rm Det}(I - Az + Qz^2)^{-1}$. This proof is elaborated in more details in \cite{Bass}. First consider a graph with $m$ undirected edges and $n$ vertices, we have following matrix equation from previous identities
\[ \left(
\begin{array}{cc}
I_n & 0 \\
T^T & I_{2m}
\end{array}
\right) \left(
\begin{array}{cc}
I_n(1-z^2) & Sz \\
0 & I_{2m} - W_1z
\end{array}
\right)  = 
\left(
\begin{array}{cc}
I_n - Az + Qz^2 & Sz \\
0 & I_{2m} + Jz
\end{array}
\right)
\left(
\begin{array}{cc}
I_n & 0 \\
T^T - S^Tz & I_{2m}
\end{array}
\right).\]
In the above equation, all matrices are of $(n+2m) \times (n+2m)$, and we take determinant on both sides of equation to give
\[(1-z^2)^n {\rm Det}(I-W_1z) = {\rm Det}(I_n - Az + Qz^2){\rm Det}(I_{2m} + Jz).\]
With the observation that
\[I+Jz = \left(
\begin{array}{cc}
I & Iz\\
Iz & I
\end{array}
\right)\] implies
\[\left(
\begin{array}{cc}
I & 0\\
-Iz & I
\end{array}
\right) (I + Iz) = \left(
\begin{array}{cc}
I & Iz\\
0 & I(1-z^2)
\end{array}
\right),\]
we obtain ${\rm Det}(I + Jz) = (1 - z^2)^m$. Therefore it is clear that
\[{\rm Det}(I - W_1z) = {\rm Det}(I_n - Az + Qz^2) {\rm Det} (1-z^2)^{m-n},\]
where the LHS is now $\zeta_G(z)^{-1}$ from theorem \ref{thm:edge_zeta}. Since $m-n = {\rm Tr}(Q-I)/2$, we have the determinant formula of Ihara zeta function.

For the details of proof of \eqref{eq:partial_zeta} that gives determinant expression of Ihara zeta function for partially directed graphs, see \cite{Tarfuleaa}.

\section{Solutions to the Markov Equation}
\label{ap:markov}

Most of the following proofs can be found in the book \cite{cassels:1965}, we
nonetheless rephrase them here for the purpose of integrality of our logic flow
in Section \ref{sec:p2_pole}. However, in order to conform with our underlying
physics context, instead of using the original definition of Markov equation
\begin{eqnarray}
  u^2 + v^2 + w^2 = 3 uvw \ ,
  \label{eq:def-original-markov}
\end{eqnarray}
we define it here as
\begin{eqnarray}
  u^2 + v^2 + w^2 = uvw \ ,
  \label{eq:def-markov}
\end{eqnarray}
which is essentially the same. Since in physics we are only interested in
positive solutions, the positivity of solutions are always presumed in
following deductions.

\begin{lemma}
  If $(a,b,c)$ is a solution to eq.~(\ref{eq:def-markov}), $(bc-a,b,c)$
  is also a solution as well as $(a,ac-b,c)$ and $(a,b,ab-c)$.
  \label{thm:another-solution}
\end{lemma}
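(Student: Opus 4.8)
The plan is to use the technique of \emph{Vieta jumping}: for each of the three claimed transformations, I would fix two of the three variables and regard the Markov equation~\eqref{eq:def-markov} as a quadratic in the remaining variable. The two roots of this quadratic are linked by Vieta's formulas, and since one root is the given solution, the other root is forced to be exactly the transformed value.

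Concretely, for the first transformation $(a,b,c)\mapsto(bc-a,b,c)$, I would hold $b$ and $c$ fixed and view the equation as the monic quadratic
\[
u^2 - (bc)\,u + (b^2+c^2) = 0
\]
in the single variable $u$, whose two roots I call $a$ and $a'$. By hypothesis $a$ is a root, and Vieta's formula for the sum of the roots gives $a + a' = bc$, so that $a' = bc - a$. Because $a'$ satisfies the very same quadratic, the triple $(bc-a,b,c)$ again solves~\eqref{eq:def-markov}, which establishes the first assertion.

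The remaining two transformations then follow immediately from the manifest symmetry of~\eqref{eq:def-markov} under permutations of $(u,v,w)$: fixing $a,c$ and solving the quadratic in the second slot yields the partner root $ac - b$, while fixing $a,b$ yields $ab - c$. I expect no real obstacle here, since the whole argument is purely algebraic; the only point deserving a moment's care is observing that the linear coefficient $bc$ equals the sum of the two roots, which is what guarantees that the second root is again a (positive) integer whenever $a,b,c$ are. As an equally valid but less illuminating alternative, one could instead substitute $bc-a$ for $a$ directly into~\eqref{eq:def-markov} and expand, verifying the identity $(bc-a)^2 + b^2 + c^2 = (bc-a)\,bc$ by a short computation.
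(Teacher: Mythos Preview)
Your proposal is correct. Your primary argument via Vieta jumping is a slight reframing of the paper's direct substitution: the paper simply plugs $bc-a$ into~\eqref{eq:def-markov} and checks that
\[
(bc-a)^2 + b^2 + c^2 = (bc-a)bc + (a^2+b^2+c^2-abc) = (bc-a)bc,
\]
which is exactly the ``less illuminating alternative'' you mention at the end. Your Vieta approach has the modest advantage of explaining \emph{why} the transformation takes the form $a\mapsto bc-a$ (it is the other root of the monic quadratic), whereas the paper's verification presupposes that form and just confirms it; conversely, the paper's one-line computation is marginally shorter. Both invoke the permutation symmetry of~\eqref{eq:def-markov} to dispose of the remaining two cases. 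One small caveat: your parenthetical claim that Vieta's sum formula guarantees the second root is \emph{positive} is not justified by that formula alone (it only gives integrality); positivity is established separately in the paper via Lemma~\ref{thm:adjacent-solution-order}, and in any case is not part of the statement of the present lemma.
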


\begin{proof}
It is readily seen that eq.~(\ref{eq:def-markov}) is symmetric under permutation
of variables $(u,v,w)$, it is therefore sufficient to prove one of these combinations
satisfies eq.~(\ref{eq:def-markov}).
\begin{displaymath}
  (bc - a)^2 + b^2 + c^2 = (bc - a)bc + (a^2 + b^2 + c^2 - abc) = (bc - a)bc .
\end{displaymath}
\end{proof}

Comparing this with the Seiberg duality rules in Section~\ref{sec:seiberg-dual},
one would see that $(a,b,c)\rightarrow(bc-a,b,c)$ is exactly of Seiberg duality
type apart from a common factor $-1$. Thus we shall call this operation on solutions
``\emph{Seiberg transformation}''.
There are two obvious solutions, $(3,3,3)$ and $(3,3,6)$ (of course, $(3,6,3)$
and $(6,3,3)$ are considered as the same solutions as $(3,3,6)$ due to the
symmetry under permutation). 
These two are dubbed \emph{singular solutions} because
there are equal numbers in both solutions. \emph{Non-singular solutions} are then
defined as solutions with three distinct numbers.

\begin{lemma}
  All solutions to Markov equation except the above two are non-singular.
  \label{thm:non-singular}
\end{lemma}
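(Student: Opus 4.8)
The plan is to prove the equivalent statement that the only positive solutions of (\ref{eq:def-markov}) carrying a repeated entry are $(3,3,3)$ and $(3,3,6)$; every other solution then has three distinct entries, which is precisely non-singularity. By the permutation symmetry of (\ref{eq:def-markov}) (already exploited in the proof of Lemma~\ref{thm:another-solution}), I may place the repeated value in the first two slots and write such a solution as $(a,a,c)$ with $a,c$ positive integers. Substituting $u=v=a$, $w=c$ collapses the three-variable equation into the single quadratic $2a^2 + c^2 = a^2 c$, i.e. $c^2 - a^2 c + 2a^2 = 0$.

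Next I would solve this quadratic for $c$, obtaining $c = \tfrac{1}{2}\bigl(a^2 \pm a\sqrt{a^2-8}\,\bigr)$. Integrality of $c$ already forces reality of the root, so $a^2 \ge 8$, hence $a \ge 3$; and it forces the discriminant to be a perfect square, say $a^2 - 8 = k^2$ with $k \ge 0$ an integer. The decisive reduction is to rewrite this as $(a-k)(a+k) = 8$ and observe that $a-k$ and $a+k$ share the same parity. Since their product is even, both factors must be even, so the only admissible factorization is $(a-k,\,a+k) = (2,4)$.

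This immediately yields $a = 3$ and $k = 1$, whence $c = \tfrac{1}{2}(9 \pm 3) \in \{3,6\}$. The branch $c=3$ reproduces the fully degenerate solution $(3,3,3)$, while $c=6$ gives $(3,3,6)$, and positivity confirms both roots are genuine admissible entries. These are exactly the two singular solutions exhibited earlier, so no solution other than them can carry a repeated entry, establishing the claim.

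I do not expect a genuine obstacle here; the only care needed is bookkeeping at the boundary. One must justify that the discriminant is a perfect square rather than merely rational (which follows because $c$ is an integer), and one must rule out the degenerate factor pairs of $8$ via the parity argument rather than by inspection alone. Positivity of the entries, presumed throughout this appendix, is what discards the unwanted negative branch of the quadratic and what lets me restrict to the ordered factorization with $a-k \le a+k$.
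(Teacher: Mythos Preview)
Your argument is correct. Both you and the paper start by setting $u=v=a$ to obtain $2a^2+c^2=a^2c$ and then reduce to a tiny integer factorization via a perfect-square condition, but the specific reduction differs. The paper rearranges to $c^2=(c-2)a^2$, observes that $c/a=\sqrt{c-2}$ is rational and hence that $c-2=d^2$ is a perfect square, substitutes $c=ad$, and lands on $d(a-d)=2$; you instead treat the relation as a quadratic in $c$, force the discriminant $a^2(a^2-8)$ to be a perfect square so that $a^2-8=k^2$, and land on $(a-k)(a+k)=8$ together with the parity observation. The paper's route is marginally slicker in that it bypasses the quadratic formula and the parity check, while yours is the more mechanical discriminant analysis; either way one is left with a two-case factorization yielding $a=3$ and $c\in\{3,6\}$.
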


\begin{proof}
  Suppose we have a singular solution $(a,b,c)$, without loss of generality,
  we shall assume $a=b$, then $2a^2+c^2=a^2c\ \Rightarrow\ c^2=(c-2)a^2$.
  This indicates $\sqrt{c-2}=c/a$ is a rational number, which means $c-2$
  has to be a perfect square, say $d^2$ where $d$ is an integer. Hence with
  $c=ad$, we have $2a^2+a^2d^2=da^3\ \Rightarrow\ 2=d(a-d)$, which means
  $d=1$ or 2. In either case we will have $a=3$. Then $c=ad=3$ or 6. Thus
  the two solutions $(3,3,3)$ and $(3,3,6)$. Any other solutions would have
  $u$, $v$, $w$ being distinct.
\end{proof}

Now for all non-singular solutions, we can write them in a manner such that
each has its components in ascending order --- $(u,v,w)$ with $u<v<w$.

\begin{lemma}
  If we have a non-singular solution $(a,b,c)$ with $a<b<c$,
  then $ab-c<b$ and $bc-a>ac-b>c$.
  \label{thm:adjacent-solution-order}
\end{lemma}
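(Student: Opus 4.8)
The three inequalities are elementary algebraic facts about a single solution triple, so the plan is to reduce each one to a sign condition and check it directly; the one genuinely nontrivial input is a lower bound on the smallest component. First I would establish that $a\ge 3$. Substituting $a=1$ into the Markov equation~\eqref{eq:def-markov} gives $b^2+c^2-bc+1=0$, which is impossible for positive $b,c$ since $b^2+c^2\ge 2bc>bc$; substituting $a=2$ gives $(b-c)^2=-4$, also impossible. Hence $a\ge 3$, and this bound, together with the hypothesis $a<b<c$, will power all the remaining estimates.

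For the bound $ab-c<b$, which I expect to be the crux, the clean route is through Vieta's formulas. Fixing $a,b$ and regarding~\eqref{eq:def-markov} as the quadratic $f(x)=x^2-abx+(a^2+b^2)$ in the third variable, its two roots are exactly $c$ and its Seiberg partner $ab-c$. I would then evaluate $f(b)=a^2-(a-2)b^2$; since $a\ge 3$ forces $a-2\ge 1$ and $b>a$, this is at most $a^2-b^2<0$. A negative value of an upward-opening parabola at $x=b$ places $b$ strictly between the two roots, and because $c>b$ the root $c$ must be the larger one, leaving $ab-c$ as the smaller root, so that $ab-c<b$ as claimed.

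The remaining chain $bc-a>ac-b>c$ splits into two one-line reductions. The inequality $ac-b>c$ is equivalent to $c(a-1)>b$, which is immediate from $a-1\ge 2$ and $c>b>0$. The inequality $bc-a>ac-b$ rearranges to $(b-a)(c+1)>0$, which holds since $b>a$ and $c>0$. Concatenating these two facts yields the stated ordering $bc-a>ac-b>c$.

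The only real obstacle is the first inequality: the sign computation $f(b)=a^2-(a-2)b^2$ is genuinely negative only once $a-2\ge 1$, and indeed at $a=2$ one would have $f(b)=a^2>0$, so the argument collapses there. This is precisely why ruling out $a=1$ and $a=2$ --- that is, proving $a\ge 3$ --- must be carried out first and is the load-bearing step; the two inequalities in the second chain are then routine bookkeeping.
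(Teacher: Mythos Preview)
Your proof is correct and mirrors the paper's argument for the key inequality $ab-c<b$: both use the quadratic $f(x)=x^2-abx+a^2+b^2$, compute $f(b)=a^2-(a-2)b^2<0$ via $a\ge 3$, and conclude that $b$ lies between the two roots. The only differences are cosmetic: the paper invokes $a\ge 3$ as a known fact without the explicit elimination of $a=1,2$ that you supply, and for the chain $bc-a>ac-b>c$ the paper says ``with the same trick'' (i.e.\ repeat the parabola argument with $g(x)=x^2-acx+a^2+c^2$ to get $ac-b>c$) and ``by definition'' for $bc-a>ac-b$, whereas you dispatch both by the one-line rearrangements $c(a-1)>b$ and $(b-a)(c+1)>0$. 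Your direct algebra for the second chain is slightly cleaner than rerunning the quadratic machinery.
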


\begin{proof}
Consider the quadratic function
\begin{eqnarray}
  f(x) = x^2 - abx + a^2 + b^2 \ ,
  \label{eq:quadratic-f}
\end{eqnarray}
which has two zero points, $c$ and $ab-c$. Also we can evaluate $f(x)$ at $x=b$,
\begin{eqnarray}
  f(b) = a^2 - (a-2)b^2 \leq a^2 - b^2 < 0 \ ,
  \label{eq:f-b}
\end{eqnarray}
where in the second step we have used the fact that $a\geq 3$. Now that
$f(x)$ is an upwards concave parabola, what immediately follows is that
$ab-c<b<c$. With the same trick we can prove $ac-b>c$ and by definition
we have $bc-a>ac-b$.
\end{proof}

Now assuming $a<b<c$, we shall call $(a,b,c)\rightarrow(ab-c,a,b)$
\emph{descending transformation} and the other two \emph{ascending transformation}.
The above lemma tells us that by Seiberg transforming any non-singular
solution, we would end up with a ``bigger'' or ``smaller'' solution
depending on which node (or which arrows) your are dualizing on.
As for singular solutions, if we Seiberg transform $(3,3,6)$ we would
get $(3,3,3)$ (which is smaller) or a non-singular solution (which is
bigger), and if we Seiberg transform $(3,3,3)$ we can only get $(3,3,6)$.
Following this logic, if we we have a non-singular solution, then by
repeatedly apply descending transformation, we will constantly get
smaller solutions until there are no smaller ones, that is, it will
stop at $(3,3,3)$. This also means that we can reverse all transformation steps
to generate any non-singular solution from $(3,3,3)$. 
Thus we have:
\begin{theorem}
All positive solutions of Markov equation can be generated from $(3,3,3)$
by Seiberg transformation.
\end{theorem}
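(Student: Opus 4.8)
The plan is to prove this by \emph{descent} (a form of Vieta jumping), using the three preceding lemmas as the engine and treating solutions up to permutation, since \eqref{eq:def-markov} is symmetric. First I would record two structural facts. The Seiberg transformation acting on a single coordinate is an \emph{involution}: because $c$ and $ab-c$ are exactly the two roots of the quadratic $f(x)=x^2-abx+(a^2+b^2)$ appearing in \eqref{eq:quadratic-f} (its root-product is $a^2+b^2=c(ab-c)$ and its root-sum is $ab$), applying $(a,b,c)\mapsto(a,b,ab-c)$ twice restores the original triple, and similarly for the other two coordinates. This reversibility is what will convert a chain of descents into a chain of ascents. Moreover, since the two roots have product $a^2+b^2>0$ and sum $ab>0$, both are positive; hence $ab-c>0$ and every transformation keeps all three entries positive integers, so the orbit never leaves the positive solution set.

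Next I would install a descent measure. For any positive solution, order the entries as $a\le b\le c$ and set $M(a,b,c)=\max(a,b,c)=c$. By Lemma~\ref{thm:non-singular}, every solution other than $(3,3,3)$ and $(3,3,6)$ is non-singular, so for such a solution I may assume strict inequalities $a<b<c$ and invoke Lemma~\ref{thm:adjacent-solution-order}: the descending transformation $(a,b,c)\mapsto(ab-c,a,b)$ yields a triple whose entries are $ab-c<b$, $a<b$, and $b$, so its maximum is $b<c$ and $M$ strictly drops. For the single remaining non-trivial solution $(3,3,6)$, the transformation $w\mapsto ab-w=9-6=3$ sends it directly to $(3,3,3)$, again lowering $M$. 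Thus \emph{every} positive solution except $(3,3,3)$ admits a single Seiberg transformation that strictly decreases $M$ while staying a positive solution.

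With these two ingredients the theorem follows by strong induction on $M$. The base case $M=3$ forces the solution to be $(3,3,3)$, trivially reachable from itself. For the inductive step, given a solution $S\ne(3,3,3)$, I would apply the strictly-decreasing transformation above to obtain a positive solution $S'$ with $M(S')<M(S)$; by the inductive hypothesis $S'$ is generated from $(3,3,3)$ by a finite sequence of Seiberg transformations, and appending the \emph{inverse} of the descending step---which, by involutivity, is that same coordinate transformation applied once more---generates $S$ from $(3,3,3)$. This closes the induction, and since the well-ordering of the positive integers guarantees the descent terminates, the proof is complete.

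I expect the only delicate point to be bookkeeping rather than genuine difficulty: one must verify that the descent is genuinely strict and well-founded---that no non-singular solution is a fixed point, and that the lone singular solution $(3,3,6)$ is correctly routed down to $(3,3,3)$---so that termination occurs precisely at $(3,3,3)$ and nowhere else. Everything substantive is already supplied by Lemmas~\ref{thm:another-solution}--\ref{thm:adjacent-solution-order}, so the remaining task is merely to assemble them into the well-ordering argument and to use involutivity to reverse it.
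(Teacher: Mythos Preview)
Your proposal is correct and follows essentially the same approach as the paper: a descent on the maximum entry using the ``descending transformation'' $(a,b,c)\mapsto(ab-c,a,b)$, with Lemmas~\ref{thm:non-singular} and~\ref{thm:adjacent-solution-order} guaranteeing strict decrease for non-singular solutions and $(3,3,6)$ routed to $(3,3,3)$ by hand, then reversing the chain. Your write-up is in fact more careful than the paper's sketch---you explicitly verify positivity of $ab-c$ via Vieta and record the involutivity that justifies reversal---but the underlying argument is the same Markov-tree descent.
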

The solutions generated from a given solution are called \emph{adjacent solutions}.
For example, $(3yz-x,y,z)$ is called the adjacent solution to $(x,y,z)$ with respect to $x$.

\end{document}